\documentclass[sn-mathphys,Numbered,iicol]{sn-jnl}% Math and Physical Sciences Reference Style
\usepackage[utf8]{inputenc}
\usepackage{chngcntr}
\usepackage{graphicx}%
\usepackage{amsmath,amssymb,amsfonts}%
\usepackage{amsthm}%
\usepackage{manyfoot}% required by sn-jnl footnote hooks
\usepackage[title]{appendix}%
\usepackage{xcolor}%
\usepackage{textcomp}%
\usepackage{booktabs}%
\usepackage{tabularx}%
\usepackage{caption}%
\counterwithin{equation}{section}
\theoremstyle{thmstyleone}%
\newtheorem{theorem}{Theorem}%  meant for continuous numbers
\newtheorem{proposition}[theorem]{Proposition}% 

\theoremstyle{thmstyletwo}%

\theoremstyle{thmstylethree}%

\begin{document}

\title[Dirac--Bergmann analysis of SW-mapped NC $U(1)$ electrodynamics]{Dirac--Bergmann analysis of SW-mapped non-commutative $U(1)$ electrodynamics with external currents}

%%=============================================================%%
%% Prefix	-> \pfx{Dr}
%% GivenName	-> \fnm{Joergen W.}
%% Particle	-> \spfx{van der} -> surname prefix
%% FamilyName	-> \sur{Ploeg}
%% Suffix	-> \sfx{IV}
%% NatureName	-> \tanm{Poet Laureate} -> Title after name
%% Degrees	-> \dgr{MSc, PhD}
%% \author*[1,2]{\pfx{Dr} \fnm{Joergen W.} \spfx{van der} \sur{Ploeg} \sfx{IV} \tanm{Poet Laureate} 
%%                 \dgr{MSc, PhD}}\email{iauthor@gmail.com}
%%=============================================================%%
\author[1]{\fnm{A. G.} \sur{Andarcia-Caballero}}\email{alejandroandarcia@hotmail.com}

\author*[1]{\fnm{J.} \sur{Manuel-Cabrera}}\email{jaime.manuel@ujat.mx}

\author*[1]{\fnm{J. M.} \sur{Paulin-Fuentes}}\email{jorge.paulin@ujat.mx}

\equalcont{These authors contributed equally to this work.}

\affil*[1]{\orgdiv{Divisi\'on Acad\'emica de Ciencias B\'asicas}, \orgname{Universidad Ju\'arez Aut\'onoma de Tabasco}, \orgaddress{\street{Km 1 Carretera
Cunduac\'an-Jalpa}, \city{ Cunduac\'an}, \postcode{86690}, \state{Tabasco}, \country{ M\'exico}}}

\abstract{Non-commutative electrodynamics obtained through the Seiberg--Witten map ceases to have equivalent action-level and equation-level realizations once fixed external currents are introduced, and in the action-level construction associated with the Banerjee current map the canonical location of this source-induced obstruction has remained unclear. Working in the full phase space and treating the current as prescribed and non-dynamical, we apply the Dirac--Bergmann algorithm without imposing current conservation as an external condition. The preservation of the Gauss-type secondary constraint produces a third-stage candidate whose phase-space expression is shown to be algebraically identical, at first order in the non-commutativity parameter and for purely space--space non-commutativity, to the canonical pullback of the divergence of the mapped Euler--Lagrange equations. This identity locates the source-compatibility obstruction directly within the Dirac chain. For generic inhomogeneous sources, the next consistency step feeds this object back into the primary multiplier through a source-dependent kernel, so the chain closes by multiplier fixing rather than by the generic appearance of a quaternary constraint. Reduced-phase-space results, including the gauge generator, Dirac brackets and degree-of-freedom count, are obtained only in a restricted sufficient first-class subcase; no broader claim is made for arbitrary source profiles.}

\keywords{Non-commutative electrodynamics, Seiberg--Witten map, Dirac--Bergmann analysis, External currents, Constraint structure, Gauge obstruction}

\maketitle

%==============================================================================

\section{Introduction}
\label{sec:intro}
%==============================================================================

Non-commutative (NC) field theory deforms the coordinate algebra according to
\begin{equation}
  \bigl[x^{\mu},x^{\nu}\bigr] = i\,\theta^{\mu\nu},
  \label{eq:NCcommutator}
\end{equation}
where $\theta^{\mu\nu}$ is a constant antisymmetric tensor
\cite{Heisenberg1938,Snyder1947}. Through the Seiberg--Witten (SW) map, NC
gauge theories can be rewritten order by order in $\theta^{\mu\nu}$ in terms
of ordinary fields and standard spacetime coordinates
\cite{Douglas2001,Szabo2003,Seiberg1999,Bichl2002,Riad2000,Madore2000,Fidanza2002}.
In the absence of sources, the first-order map leads to the familiar
gauge-invariant weakly nonlinear deformation of Maxwell theory
\cite{Berrino2003,Carroll2001}.

With external sources the situation is subtler. In ordinary Maxwell theory a
prescribed current is compatible with gauge symmetry only if it is conserved.
In the NC setting, Adorno \emph{et al.}~\cite{Adorno2011CNE} showed that the
action-level and equation-level implementations of the SW map are no longer
equivalent in the presence of fixed external currents: mapping after deriving
the equations preserves gauge covariance, whereas mapping the action before
variation does not. The Hamiltonian meaning of that mismatch, and in
particular the point at which it enters the Dirac chain, has remained less
clear.

Related external-source analyses show that the nature of the source itself is
part of the problem rather than a secondary modeling choice. Cabo and
Shabad treat a non-Abelian classical current under the restriction
$J_4^{\,a}=0$, so that the source modifies the canonical consistency
conditions without producing the same temporal shift of Gauss's law examined
here~\cite{CaboShabad}. Przeszowski's canonical path-integral analysis of
Yang--Mills theory with arbitrary external sources goes further in a different
direction: the naive source coupling is replaced by a modified source sector
before quantization, precisely because the canonical structure is sensitive to
the way the external current is introduced~\cite{Przeszowski1988}. Sikivie and
Weiss, by contrast, treat static external sources as fixed backgrounds used to
define distinct classical solution sectors, such as Coulomb and screening
branches, rather than to analyze first-/second-class structure in the sense of
Dirac--Bergmann~\cite{SikivieWeiss1978}. In NC Maxwell theory, Kruglov adds a
prescribed Abelian current directly to the first-order SW-expanded
Lagrangian and studies the resulting nonlinear field equations and
energy--momentum tensors, but without a Hamiltonian source-sector analysis
of the type carried out here~\cite{Kruglov2002}. These comparisons delimit the
present problem more sharply: the issue is not merely that an external source
is present, but how a fixed four-current enters the canonical consistency
chain of the action-level SW-mapped theory.

To date, no Dirac--Bergmann analysis of the first-order action-level
SW-mapped theory with fixed external currents has been carried out: the
existing treatments either operate exclusively at the Lagrangian
level~\cite{Adorno2011CNE,Kruglov2002}, or address external sources in
canonical frameworks that do not involve the SW
map~\cite{CaboShabad,Przeszowski1988}. As a consequence, the precise point in
the canonical consistency chain at which the source-induced obstruction
enters---and the way in which it depends on the choice of SW current
map---has remained undetermined.

This paper fills that gap for the first-order action-level realization based
on the Banerjee current map~\cite{Banerjee2004PRD}. We treat $J^\mu$ as a
prescribed non-dynamical source and apply the Dirac--Bergmann algorithm in the
full phase space, keeping $A_0$ and $\pi^0$ as canonical variables throughout.
The analysis is restricted to first order in $\theta$ and to purely spatial
non-commutativity. Within that framework, the secondary Gauss-type constraint
already shows where the source enters the canonical structure, while its
preservation generates a third-stage source-dependent object whose canonical
form is shown independently to be algebraically identical to the canonical
pullback of the divergence of the Banerjee Euler--Lagrange equations. Because
this identity is tied to the specific $F_{\alpha\beta}$-dependent terms of the
Banerjee map, it is not a generic consequence of any action-level SW
realization; rather, it characterizes the map dependence of the obstruction at
the canonical level. The next consistency step closes the generic source sector
by multiplier fixing---itself a positive structural result, not merely a
limitation---whereas an exact reduced Hamiltonian description exists only in a
restricted sufficient first-class subcase.

The rest of the paper is organized as follows.
Section~\ref{sec:NCMaxwell} fixes the working model: the conventions
(Sec.~\ref{sec:conventions}), the source-free SW-mapped action in closed
commutative form (Sec.~\ref{sec:SourceFreeLagrangian}), and the current-sector
choice---the Banerjee map---that differentiates the present construction from
that of Adorno \emph{et al.} (Sec.~\ref{sec:SourceFreeEquivalence}).
Section~\ref{sec:DiracBergmann} then develops the Dirac--Bergmann analysis of
the unrestricted external-current problem, up to and including the tertiary
stage, and establishes the central identity of the paper
(Proposition~\ref{prop:DivELIdentity}): the tertiary-stage candidate coincides,
in the canonical framework, with the divergence of the mapped
Euler--Lagrange equations. Section~\ref{sec:ReducedDescription} develops the
reduced phase-space description that becomes available only in a restricted
sufficient first-class subcase, and uses it to compare the Banerjee reduction
with the gauge-invariant equation-level formulation of
Ref.~\cite{Adorno2011CNE}. The structural consequences, scope, and limitations
are analyzed in Sec.~\ref{sec:Discussion}, and the findings are summarized in
Sec.~\ref{sec:Conclusions}.

\section{Working model and current map}
\label{sec:NCMaxwell}
%==============================================================================

\subsection{Conventions and notation}
\label{sec:conventions}

We first fix the notation needed in the rest of the paper. We work in Minkowski spacetime with metric signature 
$\eta^{\mu\nu}=\mathrm{diag}(+1,-1,-1,-1)$ and set $\hbar=c=1$. Greek indices 
$\mu,\nu,\rho,\ldots$ run over $\{0,1,2,3\}$, while Latin indices 
$i,j,k,\ldots$ are restricted to spatial components $\{1,2,3\}$. The 
non-commutativity parameter $\theta^{\mu\nu}=-\theta^{\nu\mu}$ is a constant 
antisymmetric tensor with dimensions $[\text{length}]^2$.

To avoid issues with unitarity and acausal propagation, we consider purely 
space-space non-commutativity and set $\theta^{0i}=0$ identically. From the 
spatial components $\theta^{ij}$ we form the NC vector
\begin{equation}
\theta^i \equiv \frac{1}{2}\,\varepsilon^{ijk}\,\theta_{jk}\,,
\label{eq:thetaVector}
\end{equation}
where $\varepsilon^{ijk}$ is the Levi-Civita symbol, normalized by 
$\varepsilon^{123}=+1$. The commutative electromagnetic field strength is 
$F_{\mu\nu} = \partial_\mu A_\nu - \partial_\nu A_\mu$, and we decompose it 
into electric and magnetic components as
\begin{equation}
E^i = -F^{0i}\,, \quad B^i = \frac{1}{2}\,\varepsilon^{ijk} F_{jk}\,.
\label{eq:EB}
\end{equation}

\subsection{Source-free mapped action}
\label{sec:SourceFreeLagrangian}

We begin with the source-free NC Maxwell action, since the source sector will later be added on top of the same mapped gauge block \cite{Banerjee2004PRD,Abe2003}
\begin{equation}
S_{\text{NC}}
=
\int d^4x\,
\left(-\frac{1}{4}\,\hat{F}_{\mu\nu}\star\hat{F}^{\mu\nu}\right),
\label{eq:NCAction}
\end{equation}
where the NC field strength is given by\footnote{In order to construct the non-commutative version of a given field theory, we replace the ordinary pointwise product of fields in the action by the Moyal $\star$-product. The only identities needed below are the standard first-order expansion, cyclicity of the integral under suitable boundary conditions, and the Leibniz rule for ordinary derivatives.}
\begin{equation}
\begin{aligned}
\hat{F}_{\mu\nu}
={}&\, \partial_\mu \hat{A}_\nu
- \partial_\nu \hat{A}_\mu \\
&\, - i[\hat{A}_\mu, \hat{A}_\nu]_\star\,.
\end{aligned}
\label{eq:NCFieldStrength}
\end{equation}

For NC $U(1)$, the ordinary commutator vanishes, but the $\star$-commutator yields derivative corrections starting at $O(\theta)$; these are consistently captured in the SW expansion.
The Seiberg--Witten map expresses the NC quantities $(\hat{A}_\mu,\hat{F}_{\mu\nu})$ 
in terms of their commutative counterparts $(A_\mu,F_{\mu\nu})$ as a power series 
in~$\theta$~\cite{Seiberg1999,Bichl2002}. To first order one finds
\begin{equation}
\begin{aligned}
\hat{A}_\mu
={}&\, A_\mu
- \frac{1}{2}\,\theta^{\alpha\beta}\,A_\alpha
\left(\partial_\beta A_\mu + F_{\beta\mu}\right) \\
&\, + \mathcal{O}(\theta^2)\,.
\end{aligned}
\label{eq:SWPotential}
\end{equation}
From this, the induced map for the field strength follows:
\begin{equation}
\begin{aligned}
\hat{F}_{\mu\nu}
={}&\, F_{\mu\nu}
+ \theta^{\alpha\beta}
\Bigl(
F_{\mu\alpha}F_{\nu\beta}
- A_\alpha\partial_\beta F_{\mu\nu}
\Bigr) \\
&\, + \mathcal{O}(\theta^2)\,.
\end{aligned}
\label{eq:SWFieldStrength}
\end{equation}
Substituting Eq.~\eqref{eq:SWFieldStrength} into Eq.~\eqref{eq:NCAction} and using the standard cyclicity of the $\star$-product under the integral, we may discard total derivatives. The resulting Lagrangian can be written entirely in terms of the commutative field strength $F_{\mu\nu}$:
\begin{align}
\mathcal{L}_{\text{ws}} =& -\frac{1}{4}\,F_{\mu\nu}F^{\mu\nu} 
+ \frac{1}{8}\,\theta^{\alpha\beta}\,F_{\alpha\beta}\,F_{\mu\nu}F^{\mu\nu} \notag \\
&- \frac{1}{2}\,\theta^{\alpha\beta}\,F_{\mu\alpha}\,F_{\nu\beta}\,F^{\mu\nu}\,.
\label{eq:LagrangianCommutative}
\end{align}

Relative to the standard Maxwell Lagrangian, NC electrodynamics acquires 
terms cubic in $F_{\mu\nu}$, signalling that non-commutativity induces 
non-linearities already at the classical level.

Using Eqs.~\eqref{eq:EB} and~\eqref{eq:thetaVector}, the Lagrangian can be 
rewritten in the more transparent form
\begin{align}
\mathcal{L}_{\text{ws}}
&= \tfrac{1}{2}(E^2 - B^2)\bigl[1 + (\boldsymbol{\theta}\!\cdot\!\mathbf{B})\bigr]
\nonumber\\
&\quad - (\boldsymbol{\theta}\!\cdot\!\mathbf{E})(\mathbf{E}\!\cdot\!\mathbf{B})
+ \mathcal{O}(\theta^2)\,.
\label{eq:LagrangianEB}
\end{align}
This expression makes explicit how NC corrections couple the vector 
$\boldsymbol{\theta}$ to the magnetic field and generate a mixed 
$\mathbf{E}$--$\mathbf{B}$ term that vanishes in the commutative limit 
$\theta\to 0$. As we will see in Sec.~\ref{sec:DiracBergmann}, these terms 
have direct implications for the constraint structure, specifically for the 
form of the secondary (Gauss-type) constraint and the gauge-breaking mechanism 
that emerges when external sources are included.

\subsection{The Banerjee current map and relation to earlier work}
\label{sec:SourceFreeEquivalence}

Before turning to the constraint analysis, we specify the current-sector
choice and its relation to earlier work. In the absence of external sources,
both the Banerjee and the Adorno \emph{et al.}\ constructions reduce to the
same first-order SW-mapped Maxwell theory: with the convention matching
$16\pi c=4$, $g=1$, the source-free mapped action of
Ref.~\cite{Adorno2011CNE} reduces exactly to $\mathcal{L}_{\text{ws}}$ in
Eq.~\eqref{eq:LagrangianCommutative}, with $f_{\mu\nu}\equiv F_{\mu\nu}$. The
source-free sector is therefore common to both formulations; the difference
arises when a fixed external current is included.

For the current sector we adopt the standard first-order Banerjee
map~\cite{Banerjee2004PRD}, corresponding to the parameter choice
$(c_1,c_2,c_3)=(1,\tfrac{1}{2},0)$:
\begin{align}
\hat{J}^\mu
&= J^\mu
- A_\alpha\,\theta^{\alpha\beta}\,\partial_\beta J^\mu
+ F_{\alpha\beta}\,J^\beta\,\theta^{\mu\alpha}
\notag\\
&\quad
+ \frac{1}{2}F_{\alpha\beta}J^\mu\theta^{\alpha\beta}
+ \mathcal{O}(\theta^2)\,.
\label{eq:JMap}
\end{align}
This parameter choice satisfies the SW consistency condition
$\delta_\lambda\hat J^\mu = 0$ for ordinary gauge transformations
$\delta_\lambda A_\mu = \partial_\mu\lambda$, as demonstrated in
Ref.~\cite{Banerjee2004PRD} (see also~\cite{Brace2001} for the general
analysis of SW-map ambiguities in the current sector
and~\cite{BanerjeeGhosh2002} for the implications of these maps in the
anomalous current algebra). The ambiguity at first
order in $\theta$ resides in the relative weight of the
$F_{\alpha\beta}$-dependent terms, which is fixed by the condition that the
mapped current transforms covariantly; the choice
$(c_1,c_2,c_3)=(1,\tfrac{1}{2},0)$ is the unique solution within the
parametrization of Ref.~\cite{Banerjee2004PRD}.

Relative to the minimal current map used by Adorno \emph{et al.},
\begin{equation}
\check{j}^\mu
=
j^\mu + g\,\theta^{\alpha\beta}A_\alpha\partial_\beta j^\mu\,,
\label{eq:AdornoMinimalMap}
\end{equation}
the Banerjee choice~\eqref{eq:JMap} contains additional $F_{\alpha\beta}$-dependent
terms. These terms belong to the first-order SW-map ambiguity of the current
sector and are consistent with the analysis of Ref.~\cite{Adorno2011CNE}.
They leave the source-free theory unchanged but modify the source coupling at
$\mathcal{O}(\theta)$, which is precisely the effect relevant for the
Hamiltonian constraint structure analyzed below.

Adorno \emph{et al.}\ showed that applying the SW map at the action level
before variation yields equations of motion that are not gauge covariant when
$J^\mu$ is prescribed and non-dynamical, whereas applying the map after
variation (at the equation-of-motion level) preserves gauge covariance. They
conclude that the consistent route in the presence of external sources is the
equation-level one. The question addressed in the present work is therefore
\emph{not} whether the SW-mapped action is gauge covariant---that it is not,
as Adorno \emph{et al.}\ established---but whether its constrained
Hamiltonian structure provides a precise canonical diagnosis of that
obstruction, and how the diagnosis depends on the choice of SW current map.
From the viewpoint of constrained dynamics, the Lagrangian mismatch of
Ref.~\cite{Adorno2011CNE} becomes a question about closure, preservation,
and first-/second-class classification in the full
Dirac--Bergmann~\cite{Sundermeyer1982, HenneauxTeitelboim1992} framework,
which is developed in Sec.~\ref{sec:DiracBergmann}.

\section{Dirac--Bergmann analysis in full phase space}
\label{sec:DiracBergmann}
%==============================================================================

With the working Lagrangian $\mathcal{L}=\mathcal{L}_{\rm ws}+\mathcal{L}_J$ and
the Banerjee current map fixed in Sec.~\ref{sec:NCMaxwell}, we now submit the
theory to the Dirac--Bergmann algorithm in the full phase space, treating
$J^\mu$ as prescribed and non-dynamical and \emph{without} imposing
$\partial_\mu J^\mu=0$ from the outside. The goal of this section is twofold.
First, to identify the canonical locus at which the external source first
enters the consistency chain, which will turn out to be the tertiary stage
(Sec.~\ref{sec:MainIdentity}, Proposition~\ref{prop:DivELIdentity}). Second, to
characterize the algebraic structure of that tertiary obstruction through the
three source kernels $\Theta^l$, $\Sigma^{lk}$, $\Xi^l$, which determine how
the algorithm closes and in which restricted sector a standard first-class
reduction remains available (Sec.~\ref{sec:SectorStructure}). Throughout this
section, no restriction is placed on the source profile; the sectorial
discussion emerges from the algorithm itself.

\subsection{Source Lagrangian, primary constraint, and primary Hamiltonian}
\label{sec:CurrentConsistency}

We split the working Lagrangian into a source-free gauge block and a mapped
source block,
\begin{equation}
\mathcal{L}=\mathcal{L}_{\text{ws}}+\mathcal{L}_J\,,
\label{eq:LagrangianSplit}
\end{equation}
where $\mathcal{L}_{\text{ws}}$ is the source-free piece in
Eq.~\eqref{eq:LagrangianEB}, and $\mathcal{L}_J$ denotes the coupling to the
prescribed external current after the SW map. Substituting the Banerjee
current map~\eqref{eq:JMap} into the interaction term and using the standard
properties of the $\star$-product, one obtains
\begin{equation}
\begin{split}
\mathcal{L}_J
&= -A_\mu \Bigl[
J^\mu
+\theta^{\mu\alpha}J^\beta
\Bigl(
\partial_\alpha A_\beta
-\tfrac{1}{2}\partial_\beta A_\alpha
\Bigr)
\Bigr]
\\
&\qquad
+\mathcal{O}(\theta^2)\,.
\end{split}
\label{eq:LagrangianSource}
\end{equation}

With the source Lagrangian in hand, we now proceed to the canonical
formulation. The momenta conjugate to $A_\lambda$ are defined as
\begin{equation}
\pi^\lambda \equiv \frac{\delta \mathcal{L}}{\delta \dot{A}_\lambda}\,,
\label{eq:ConjugateMomenta}
\end{equation}
where the dot denotes $\partial_0$. Since $\mathcal{L}$ does not contain 
$\dot{A}_0$, we immediately obtain the primary constraint
\begin{equation}
\phi_1 \equiv \pi^0 \approx 0\,.
\label{eq:PrimaryConstraint}
\end{equation}
The spatial components of the momentum decompose into source-free and 
source-dependent contributions,
\begin{equation}
\pi^i = [\pi^i]_{\text{ws}} + [\pi^i]_J\,,
\label{eq:MomentaSplit}
\end{equation}
where
\begin{align}
[\pi^i]_{\text{ws}}
&= \bigl[\tfrac{1}{2}F_{kl}\theta^{kl} - 1\bigr]F^{0i}
   + F^{0j}F_{jk}\theta^{ki} \nonumber\\
&\quad + F^{0m}F_{kj}\eta_{lm}\eta^{ji}\theta^{lk}\,,
\label{eq:PiWS} \\
[\pi^i]_J &= \tfrac{1}{2}\theta^{ji}A_j J^0\,.
\label{eq:PiJ}
\end{align}
The combined spatial momentum is therefore
\begin{align}
\pi^i
&= \bigl[\tfrac{1}{2}F_{kl}\theta^{kl} - 1\bigr]F^{0i}
   + F^{0j}F_{jk}\theta^{ki}
\nonumber\\
&\quad + F^{0m}F_{kj}\eta_{lm}\eta^{ji}\theta^{lk}
   + \tfrac{1}{2}\theta^{ji}A_j J^0\,.
\label{eq:Pi_i_final}
\end{align}
It will be useful throughout the paper to isolate the purely field-strength,
source-independent $\mathcal{O}(\theta)$ deformation of the momentum. Writing
\begin{equation}
\pi^i
=
-\,F^{0i}
+\mathcal K^i(F)
+\tfrac{1}{2}\theta^{ji}A_jJ^0,
\label{eq:KofFImplicit}
\end{equation}
the field-strength bilinear kernel is
\begin{equation}
\begin{aligned}
\mathcal K^i(F)
={}&\tfrac{1}{2}\,(\theta^{kl}F_{kl})\,F^{0i}
+\theta^{ki}F^{0j}F_{jk} 
\\ &
+\theta^{lk}\eta_{lm}\eta^{ji}F^{0m}F_{kj}\,.
\end{aligned}
\label{eq:KofFDef}
\end{equation}
Thus $\mathcal{K}^i(F)$ collects all the $\mathcal{O}(\theta)$ field-strength
bilinear corrections that deform the Gauss-type canonical structure without
introducing explicit dependence on the potential beyond the separate source
term $\tfrac{1}{2}\theta^{ji}A_jJ^0$. This decomposition will be used both in
the compact form of the tertiary-stage analysis (Sec.~\ref{sec:MainIdentity})
and in the on-shell Gauss-law comparison (Sec.~\ref{sec:OnShellComparison}).

Solving Eq.~\eqref{eq:Pi_i_final} perturbatively for $F^{0i}$ to first order
in $\theta$ via $M^{i}{}_{j}=\delta^{i}{}_{j}+\mathcal{O}(\theta)$, one finds
\begin{align}
F^{0i}
&= \pi^i\Bigl(\tfrac{1}{2}F_{kl}\theta^{lk} - 1\Bigr)
- \pi^k\bigl(F_{kl}\theta^{li} + F_{ml}\eta^{im}\eta_{nk}\theta^{ln}\bigr)
\nonumber\\
&\quad + [\pi^i]_J\,.
\label{eq:F0i_solution}
\end{align}

The canonical Hamiltonian splits into a source-free and a source-dependent part,
\begin{equation}
H_c = H_{\text{ws}} + H_J = \int \left(\mathcal{H}_{\text{ws}} + \mathcal{H}_J
\right) d^3x\,,
\label{eq:CanonicalHamiltonian}
\end{equation}
with
\begin{align}
\mathcal{H}_{\text{ws}}
&= \pi_i\partial^i A_0
   + \tfrac{1}{4}F_{ij}F^{ij}\Bigl[1 - \tfrac{1}{2}F_{ml}\theta^{ml}\Bigr]
\nonumber\\
&\quad - \tfrac{1}{2}\eta_{00}\eta_{ij}\pi^i\pi^j
   \Bigl[1 + \tfrac{1}{2}F_{ml}\theta^{ml}\Bigr]
\nonumber\\
&\quad + \tfrac{1}{2}F^{ip}F_{li}F_{pm}\theta^{ml}
\nonumber\\
&\quad - F_{li}\eta_{00}\eta_{mj}\pi^i\pi^j\theta^{ml}\,,
\label{eq:H_ws}
\end{align}
and
\begin{align}
\mathcal{H}_J
&= A_0 J^0 + A_j J^j
\nonumber\\
&\quad + \tfrac{1}{2}\theta^{jl}\bigl[
   A_j J^0\partial_l A_0 + A_j J^k\partial_l A_k
\nonumber\\
&\quad\qquad
   + A_j J^0\eta_{00}\eta_{li}\pi^i
   - A_j F_{kl}J^k\bigr]\,.
\label{eq:H_J}
\end{align}
The primary Hamiltonian is then
\begin{equation}
H_1 = \int \left(\mathcal{H}_{\text{ws}} + \mathcal{H}_J + u_1\,\phi_1\right) 
d^3x\,,
\label{eq:PrimaryHamiltonian}
\end{equation}
where $u_1$ is a Lagrange multiplier enforcing the primary constraint.

%==============================================================================

\subsection{Secondary constraint and the tertiary-stage candidate}
\label{sec:SecondaryConstraints}

With the primary constrained surface defined and the primary Hamiltonian
constructed, the Dirac--Bergmann algorithm requires us to demand the time
preservation of the primary constraint, $\dot{\phi}_{1} \approx 0$. To
evaluate this and the subsequent consistency conditions, we use the
fundamental equal-time Poisson brackets
\begin{equation}
\{A_\mu(x), \pi^\nu(y)\}
= \delta^\nu_\mu\,\delta^3(x-y),
\label{eq:PoissonBracket_Api}
\end{equation}
and
\begin{equation}
\{F_{ij}(x), \pi_k(y)\}
= \delta^k_i\,\partial_j\delta^3(x-y)
 - \delta^k_j\,\partial_i\delta^3(x-y).
\label{eq:PoissonBracket_Fpi}
\end{equation}

Using Eq.~\eqref{eq:PrimaryConstraint} and the explicit form of
Eq.~\eqref{eq:PrimaryHamiltonian}, the condition $\{\phi_1, H_1\}=0$ yields
the secondary constraint
\begin{equation}
\phi_2
= \partial_i\pi^i - J^0
+ \frac{1}{2}\,\theta^{kl}\,\partial_l(A_k J^0)
\approx 0.
\label{eq:SecondaryConstraint}
\end{equation}
In the commutative limit, this reduces to Gauss's law,
$\partial_i\pi^i=J^0$.

To test whether the algorithm closes or generates further constraints, we
must also preserve $\phi_2$ in time. Because the external current $J^\mu(x)$
is a prescribed function (not a dynamical variable), its time derivative
contributes through an explicit partial derivative. Hence
\begin{equation}
\dot\phi_2
= \{\phi_2,H_c\}_{\text{PB}}
+ \frac{\partial\phi_2}{\partial t}\bigg|_{\rm expl},
\label{eq:PhiDotSplit}
\end{equation}
with
\begin{equation}
\frac{\partial\phi_2}{\partial t}\bigg|_{\rm expl}
= -\partial_0J^0
+ \frac{1}{2}\theta^{kl}\partial_l\!\left(A_k\,\partial_0J^0\right).
\label{eq:PhiExpl}
\end{equation}

The canonical Hamiltonian splits as $H_c = H_{\text{ws}} + H_J$ (source-free and source parts), so the bracket decomposes accordingly:
\begin{equation}
\{\phi_2,H_c\}_{\text{PB}}
= \{\phi_2,H_{\text{ws}}\}_{\text{PB}}
+ \{\phi_2,H_J\}_{\text{PB}}.
\label{eq:PBsplit}
\end{equation}

Using the functional-derivative form of the Poisson bracket, with the intermediate contractions collected in Appendix~\ref{app:PBclosure}, one finds
\begin{equation}
\begin{aligned}
\{\phi_2,H_{\text{ws}}\}
{}&= \frac{1}{2}\theta^{kl}(\partial_lJ^0)\,\partial_kA_0 \\
&\quad
- \frac{1}{2}\theta^{jl}\eta_{00}\eta_{jn}
\,\partial_l\!\bigl(J^0\pi^n\bigr),
\end{aligned}
\label{eq:PBws_result}
\end{equation}
and
\begin{align}
\{\phi_2,H_J\}
{}&=-\partial_iJ^i
-\frac{1}{2}\theta^{il}\partial_i(J^0\partial_lA_0)
\nonumber\\
&\quad
+\frac{1}{2}\theta^{jl}\partial_i\partial_l(J^iA_j)
-\frac{1}{2}\theta^{il}\partial_i(J^k\partial_lA_k)
\nonumber\\
&\quad
-\frac{1}{2}\theta^{il}\eta_{00}\eta_{lk}\partial_i(J^0\pi^k)
+\frac{1}{2}\theta^{il}\partial_i(J^kF_{kl}).
\label{eq:PBJ_result}
\end{align}

Combining Eqs.~\eqref{eq:PBws_result} and \eqref{eq:PBJ_result} with
Eq.~\eqref{eq:PhiExpl} gives
\begin{align}
\dot\phi_2
={}& -\partial_\mu J^\mu
+ \theta^{li}(\partial_iJ^0)\partial_lA_0
+ \frac{1}{2}\theta^{jl}\partial_i\!\bigl(\partial_l(J^iA_j)\bigr)
\nonumber\\
&\quad
+ \theta^{li}\partial_i\!\bigl(J^k\partial_lA_k\bigr)
+ \frac{1}{2}\theta^{il}\partial_i\!\bigl(J^k\partial_kA_l\bigr)
\nonumber\\
&\quad+ \frac{1}{2}\theta^{kl}\partial_l\!\left(A_k\,\partial_0J^0\right).
\label{eq:PBintermediate}
\end{align}

If current conservation is not imposed a priori, the consistency condition
\begin{equation}
\Phi_3^{\rm cand}(x)
:=
\dot\phi_2(x)
\approx0
\label{eq:Phi3candDef}
\end{equation}
defines the third-stage candidate,
\begin{equation}
\begin{aligned}
\Phi_3^{\rm cand}
={}& -\partial_\mu J^\mu
+ \theta^{li}(\partial_iJ^0)\partial_lA_0 \\
&\quad
+ \theta^{li}\partial_i\!\bigl(J^k\partial_lA_k\bigr) \\
&\quad
+ \frac{1}{2}\theta^{jl}\partial_i\!\bigl(\partial_l(J^iA_j)\bigr) \\
&\quad
+ \frac{1}{2}\theta^{il}\partial_i\!\bigl(J^k\partial_kA_l\bigr) \\
&\quad
+ \frac{1}{2}\theta^{kl}\partial_l\!\left(A_k\,\partial_0J^0\right).
\end{aligned}
\label{eq:Phi3candExplicit}
\end{equation}
At $\theta=0$, this reduces to
\begin{equation}
\Phi_3^{\rm cand}\big|_{\theta=0}
= -\partial_\mu J^\mu,
\label{eq:Phi3candCommutative}
\end{equation}
so the tertiary-stage candidate collapses to the standard
current-compatibility condition. The first-order deformation of this
condition---and, as we now show, its Lagrangian meaning---is the central
technical result of the paper.

\subsection{\texorpdfstring{Main identity: tertiary candidate as the canonical pullback of $\partial_\nu \mathcal{E}_B^\nu$}{Main identity: tertiary candidate as the canonical pullback of the mapped divergence}}
\label{sec:MainIdentity}

The goal of this subsection is to establish the main identity of the paper:
the tertiary-stage candidate $\Phi_3^{\rm cand}=\dot\phi_2$ obtained above by
preservation of the secondary Gauss-type constraint coincides, once expressed
in the same $\mathcal{O}(\theta)$ canonical variables, with the divergence
$\partial_\nu\mathcal{E}_B^\nu$ of the mapped Euler--Lagrange equations. This
identity fixes the canonical locus at which the source-induced obstruction of
the action-level Banerjee realization enters the Dirac chain.

\begin{proposition}
\label{prop:DivELIdentity}
To first order in $\theta$, and without imposing
$\partial_\mu J^\mu=0$, the third-stage object generated by preserving the
secondary Gauss-type constraint satisfies the exact algebraic identity
\begin{equation}
\boxed{\;\partial_\nu \mathcal E_B^\nu
= \dot\phi_2
= \Phi_3^{\rm cand}\;}
\label{eq:MainIdentityDivEL}
\end{equation}
once the divergence of the mapped Euler--Lagrange equations is rewritten in
the same canonical phase-space variables used in the Dirac analysis and the
comparison is consistently truncated at first order in $\theta$. In the
action-level Banerjee realization considered here, the tertiary-stage
candidate is therefore algebraically identical to the canonical phase-space
pullback of the divergence of the mapped Euler--Lagrange equations. This
equality does not imply that the two objects have the same geometric status;
it states only that, when expressed in the same $\mathcal{O}(\theta)$
canonical framework, they encode the same source-compatibility content.
\end{proposition}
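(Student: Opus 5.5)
The plan is to compute both sides independently and compare them term by term at $\mathcal{O}(\theta)$. First I will define the mapped Euler--Lagrange expression $\mathcal{E}_B^\nu$ by varying $\mathcal{L}=\mathcal{L}_{\rm ws}+\mathcal{L}_J$ with respect to $A_\nu$, with the sign fixed so that $\mathcal{E}_B^\nu=0$ is the field equation:
\begin{equation*}
\mathcal{E}_B^\nu=\partial_\mu\frac{\partial\mathcal{L}}{\partial(\partial_\mu A_\nu)}-\frac{\partial\mathcal{L}}{\partial A_\nu}.
\end{equation*}
Two structural observations will simplify the work. First, the $\mathcal{L}_{\rm ws}$ part contributes a term of the form $\partial_\mu G^{\mu\nu}(F)$ with $G^{\mu\nu}$ antisymmetric, because $\mathcal{L}_{\rm ws}$ depends only on $F$. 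Its divergence $\partial_\nu\partial_\mu G^{\mu\nu}$ therefore vanishes identically. Second, the quantity $\pi^i$ of Eq.~\eqref{eq:Pi_i_final} is exactly $\partial\mathcal{L}/\partial(\partial_0A_i)$. From this it follows that $\mathcal{E}_B^0$ is, up to sign, the Lagrangian image of $\phi_2$ in Eq.~\eqref{eq:SecondaryConstraint}: it contains $\partial_i\pi^i$, $J^0$, the term $\tfrac12\theta^{kl}\partial_l(A_kJ^0)$, and the remaining $\partial\mathcal{L}_J/\partial A_0$ pieces. Only the source block $\mathcal{L}_J$ of Eq.~\eqref{eq:LagrangianSource} survives in $\partial_\nu\mathcal{E}_B^\nu$.

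Next I will compute the source block explicitly. This means evaluating
\begin{equation*}
\partial_\nu\Bigl[\partial_\mu\frac{\partial\mathcal{L}_J}{\partial(\partial_\mu A_\nu)}-\frac{\partial\mathcal{L}_J}{\partial A_\nu}\Bigr]
\end{equation*}
with $\theta^{0i}=0$, so that only spatial $\theta^{ij}$ contractions appear. The $\theta=0$ part gives $\partial_\mu J^\mu$, up to the overall sign convention, which matches Eq.~\eqref{eq:Phi3candCommutative}. The $\mathcal{O}(\theta)$ part produces terms bilinear in $J$ (or $\partial J$) and $\partial A$ or $A$. I will sort these by time versus space derivatives: the terms with $\partial_0$ are those containing $\partial_0J^0$, $\partial_0A_k$, and $\partial_0^2A_k$. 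These must then be rewritten in canonical variables. I will trade $\dot A_k$ for $F_{0k}+\partial_kA_0$, and then trade $F^{0k}$ for $\pi^k$ via Eq.~\eqref{eq:F0i_solution}. Because these terms are already multiplied by $\theta$, only the zeroth-order inversion $F^{0k}=-\pi^k$ (plus $\partial^kA_0$ pieces) is needed, which is the consistent truncation the statement refers to.

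The main obstacle is the second-time-derivative terms, such as $\theta^{kl}\partial_0\partial_l(A_kJ^0)$, which produce $\ddot A_k$ and have no direct canonical analogue. My first attempt will use the fact that in the Lagrangian divergence the combination $\partial_0\mathcal{E}_B^0+\partial_i\mathcal{E}_B^i$ arises naturally. Then $\partial_0$ acting on the $\mathcal{E}_B^0$-image of $\phi_2$ should be interpreted as the total time derivative of $\phi_2$ along the Hamiltonian flow generated by $H_1$, as in Eq.~\eqref{eq:PhiDotSplit}. The explicit-time piece Eq.~\eqref{eq:PhiExpl} accounts for $\partial_0J^0$, and Hamilton's equation $\dot A_k=\{A_k,H_1\}$ accounts for $\partial_0A_k$. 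The $\ddot A$ terms then never need separate treatment: they enter only through $\partial_i\dot\pi^i$, which is replaced by $\partial_i\{\pi^i,H_1\}$. The spatial part $\partial_i\mathcal{E}_B^i$ then matches the remaining contributions of $\{\phi_2,H_c\}$ in Eqs.~\eqref{eq:PBws_result}--\eqref{eq:PBJ_result}.

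Finally, I will verify that the leftover $\mathcal{O}(\theta)$ terms agree with Eq.~\eqref{eq:Phi3candExplicit} term by term. This requires integrating by parts within the local expressions, using antisymmetry of $\theta^{ij}$ to cancel pairs such as $\theta^{jl}\partial_j\partial_l(\cdot)$, and discarding $\mathcal{O}(\theta^2)$ terms.
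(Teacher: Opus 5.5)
Your overall route (discard $\mathcal{L}_{\rm ws}$ because $P^{\mu\nu}$ is antisymmetric, take the divergence of the source block with $\theta^{0i}=0$, and match it against Eq.~\eqref{eq:Phi3candExplicit} using the Leibniz rule and the antisymmetry of $\theta^{ij}$) is exactly the paper's proof, and it goes through.

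Two small cautions apply. First, the identity is pointwise, so no total derivatives may be dropped; ``integrating by parts'' can only mean the Leibniz rule. Second, with your sign convention for $\mathcal{E}_B^\nu$ you would obtain $\partial_\nu\mathcal{E}_B^\nu=-\dot\phi_2$. The boxed sign corresponds to the convention of Eq.~\eqref{eq:EJ_source_only}, where $\mathcal{E}_J^\nu=-J^\nu+\dots$.

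More importantly, the obstacle around which you organize your third paragraph does not exist. With $\theta^{0i}=0$ one has $\mathcal{E}_J^0=-J^0-\theta^{il}\partial_i(J^0A_l)$, which contains no velocities. So $\partial_0\mathcal{E}_J^0$ produces $\partial_i\dot A_l$ but never $\ddot A_l$; your example $\theta^{kl}\partial_0\partial_l(A_kJ^0)$ is of this type. Moreover, its velocity term $-\theta^{il}\partial_i(J^0\dot A_l)$ cancels identically against $+\theta^{kl}\partial_k(J^0\dot A_l)$. That term comes from the piece $-\theta^{kl}J^0F_{l0}$ of $\mathcal{E}_J^k$, whose remaining part gives exactly $\Theta^l\partial_lA_0$. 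Hence $\partial_\nu\mathcal{E}_B^\nu$ contains no time derivatives of $A_\mu$ at all and reduces directly to Eq.~\eqref{eq:CompatibilityCompact}. No Legendre inversion or $\mathcal{O}(\theta)$ truncation is actually needed, consistent with the cancellation of every $\pi^k$ in $\dot\phi_2$.

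Your fallback argument is a genuinely different idea, but its bookkeeping is wrong as stated. Once you replace $\partial_0A_k$ by $\{A_k,H_1\}$ and the Lagrangian $\partial_0\pi^i$ by $\{\pi^i,H_1\}$ inside $\partial_0\mathcal{E}_B^0$, you have already produced all of $\{\phi_2,H_c\}$ plus the explicit term \eqref{eq:PhiExpl}. Nothing remains for $\partial_i\mathcal{E}_B^i$ to ``match''. Read literally, you get $\dot\phi_2$ plus an uncompensated $\partial_i\mathcal{E}_B^i$, i.e.\ the identity only on the shell $\mathcal{E}_B^i=0$.

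The correct statement is that the second replacement is not exact. Pulled back along the Legendre map, $\partial_0\pi^i-\{\pi^i,H_c\}=-\mathcal{E}_B^i$ (paper's convention), and it is this discrepancy that $\partial_i\mathcal{E}_B^i$ cancels. You also need the exact relation $\mathcal{E}_B^0=\phi_2$, which follows from $\partial\mathcal{L}/\partial A_0=-J^0$ and $\partial\mathcal{L}_J/\partial(\partial_iA_0)=-\theta^{ki}J^0A_k=-2[\pi^i]_J$; this is where the factor $\tfrac12$ in $\phi_2$ comes from. Together these give a short structural proof of $\partial_\nu\mathcal{E}_B^\nu=\dot\phi_2$ from the Legendre correspondence alone.

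What the corrected route buys is generality. It holds for any first-order source coupling that keeps $\pi^0\approx0$ primary, so only the explicit form of $\Phi_3^{\rm cand}$, not the identity itself, is specific to the Banerjee map. The paper's explicit computation, by contrast, is what delivers the compact form \eqref{eq:CompatibilityCompact} from which $\Theta^l$, $\Sigma^{lk}$ and $\Xi^l$ are read off.
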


\noindent \emph{Proof sketch.} A direct Lagrangian computation is combined
with an independent canonical computation. On the Lagrangian side, starting
from the Banerjee-mapped Lagrangian in Eq.~\eqref{eq:LagrangianSplit},
\begin{equation}
\mathcal L_B
= \mathcal L_{\text{ws}}(F)+\mathcal L_J(A,\partial A;J),
\label{eq:LBsplit}
\end{equation}
the Euler--Lagrange operator decomposes as
\begin{equation}
\mathcal E_B^\nu
= \mathcal E_{\text{ws}}^\nu+\mathcal E_J^\nu,
\label{eq:ELsplit}
\end{equation}
with source contribution
\begin{equation}
\begin{aligned}
\mathcal E_J^\nu
={}& -J^\nu
-\theta^{\nu\alpha}J^\beta F_{\alpha\beta}
-\theta^{\alpha\beta}(\partial_\alpha J^\nu)A_\beta \\
&\quad
-\theta^{\alpha\beta}J^\nu\partial_\alpha A_\beta
+\frac{1}{2}\theta^{\nu\beta}(\partial_\mu J^\mu)A_\beta.
\end{aligned}
\label{eq:EJ_source_only}
\end{equation}
Since $\mathcal L_{\text{ws}}$ depends on $A_\mu$ only through the antisymmetric
field strength $F_{\mu\nu}$, its Euler--Lagrange operator has the form
\begin{equation}
\mathcal E_{\text{ws}}^\nu=\partial_\mu P^{\mu\nu},
\qquad
P^{\mu\nu}:=\frac{\partial \mathcal L_{\text{ws}}}{\partial(\partial_\mu A_\nu)},
\label{eq:ELwsdivform}
\end{equation}
with $P^{\mu\nu}=-P^{\nu\mu}$. Therefore
\begin{equation}
\partial_\nu \mathcal E_{\text{ws}}^\nu=0,
\label{eq:ELwsdivzero}
\end{equation}
and hence
\begin{equation}
\partial_\nu \mathcal E_B^\nu
= \partial_\nu \mathcal E_J^\nu.
\label{eq:divELsourceonly}
\end{equation}
On the canonical side, the computation of $\dot\phi_2$ carried out in
Sec.~\ref{sec:SecondaryConstraints} yields Eq.~\eqref{eq:Phi3candExplicit}.
Using the antisymmetry of $\theta^{ij}$ to rearrange terms, both
$\partial_\nu\mathcal{E}_B^\nu$ and $\dot\phi_2$ reduce to the common
expression
\begin{equation}
\begin{aligned}
\Phi_3^{\rm cand}
={}& -\partial_\mu J^\mu
+\theta^{li}(\partial_iJ^0)\partial_lA_0
+\theta^{li}\partial_i(J^k\partial_lA_k) \\
&\quad
-\frac{1}{2}\theta^{il}\partial_i\!\bigl[A_l(\partial_\mu J^\mu)\bigr].
\end{aligned}
\label{eq:CompatibilityCompact}
\end{equation}
A fully independent derivation---which does not rely on the compact form
above and uses only Eqs.~\eqref{eq:PhiDotSplit}--\eqref{eq:PhiExpl} together
with the functional derivatives of $\phi_2$---is provided in
Appendix~\ref{app:ELDivergenceIdentity}. \hfill$\square$

Equation~\eqref{eq:CompatibilityCompact} fixes the Lagrangian meaning of the
third-stage candidate: within the action-level Banerjee realization,
$\Phi_3^{\rm cand}$ is the canonical phase-space pullback of
$\partial_\nu\mathcal{E}_B^\nu$, without implying any stronger equivalence
between their geometric roles. An important structural remark is that this
identity is \emph{map-dependent}: the $F_{\alpha\beta}$-dependent terms
specific to the Banerjee parametrization $(c_1,c_2,c_3)=(1,\tfrac{1}{2},0)$
enter the explicit form of $\Phi_3^{\rm cand}$ through the source coupling
$\mathcal{L}_J$. For a different SW current map---such as the minimal map of
Adorno \emph{et al.}~\cite{Adorno2011CNE}---the same Dirac--Bergmann
procedure would produce a tertiary candidate with different
$\mathcal{O}(\theta)$ coefficients. The identity
\eqref{eq:MainIdentityDivEL} therefore characterizes the canonical
fingerprint of the Banerjee realization specifically, rather than a universal
feature of any action-level SW-mapped theory with external sources.

Equation~\eqref{eq:MainIdentityDivEL} also permits a precise comparison with
the canonical external-source analysis of Cabo and
Shabad~\cite{CaboShabad}: in their construction, the source-dependent
condition selected by consistency appears at the secondary stage, whereas in
the present framework the corresponding source-compatibility content is
generated at the tertiary stage through $\Phi_3^{\rm cand}$. The comparison
is therefore functional rather than stage-by-stage: the two objects are not
being identified with one another, but in both analyses constraint
preservation isolates the source-dependent condition required for consistency
in the presence of external currents.

%==============================================================================

\subsection{\texorpdfstring{Preservation of $\Phi_3^{\rm cand}$ and generic closure by multiplier fixing}{Preservation of the tertiary candidate and generic closure by multiplier fixing}}
\label{sec:Phi3CandEvolution}

Having identified $\Phi_3^{\rm cand}$ as the next consistency condition, the
Dirac algorithm requires
\begin{equation}
\begin{aligned}
\dot\Phi_3^{\rm cand}(x)
={}& \{\Phi_3^{\rm cand}(x),H_T\} \\
&\quad +\frac{\partial\Phi_3^{\rm cand}(x)}{\partial t}\Big|_{\rm expl}
\approx0.
\end{aligned}
\label{eq:Phi3CandEvolutionStart}
\end{equation}
Two structural facts are immediate. Since $\Phi_3^{\rm cand}$ contains no
canonical momenta,
\begin{equation}
\{\Phi_3^{\rm cand}(x),\Phi_3^{\rm cand}(y)\}=0,
\label{eq:Phi3CandSelfBracket}
\end{equation}
and the only $A_0$ dependence of $\Phi_3^{\rm cand}$ is through the term
$\theta^{li}(\partial_iJ^0)\partial_lA_0$, so its bracket with the primary
constraint is
\begin{equation}
\begin{aligned}
\{\Phi_3^{\rm cand}(x),\phi_1(y)\}
&= \Theta^l(x)\,\partial_l^x\delta^3(x-y), \\
\Theta^l &:= \theta^{li}\partial_iJ^0.
\end{aligned}
\label{eq:Phi3CandPrimaryBracket}
\end{equation}
Therefore the generic structure of the next consistency equation is
\begin{equation}
\begin{aligned}
\dot\Phi_3^{\rm cand}
={}&\Theta^l\partial_l u_1
+\mathcal F[A_\mu,\pi^\mu;J^\mu,\partial J,\partial^2J] \\
&\quad -\partial_0(\partial_\mu J^\mu)
\approx0,
\end{aligned}
\label{eq:Phi3CandEvolutionGeneric}
\end{equation}
where $\mathcal F$ collects the remaining first-order terms. The explicit
expression is recorded in Appendix~\ref{app:SelfBrackets}. Generically, for
$\Theta^l\neq0$, Eq.~\eqref{eq:Phi3CandEvolutionGeneric} fixes the primary
multiplier $u_1$ along the differential operator $\Theta^l\partial_l$ rather
than generating an independent quaternary constraint. In other words, the
generic inhomogeneous branch closes by multiplier fixing at the next stage.
Only in degenerate sectors, such as $\Theta^l=0$ together with additional
cancellations, can the algorithm continue further.

\subsection{Equal-time algebra and hierarchical source-kernel structure}
\label{sec:SectorStructure}

For the unrestricted external-current problem, it is useful to record the
equal-time algebra established up to the tertiary stage. Besides
$\{\phi_1,\phi_1\}=0$ and $\{\phi_1,\phi_2\}=0$, one has
\begin{equation}
\begin{aligned}
\{\phi_2(x),\phi_2(y)\}
={}&\, \frac{1}{2}\,\Theta^k(y)\,\partial_k^y\delta^3(x-y) \\
&\quad -\frac{1}{2}\,\Theta^k(x)\,\partial_k^x\delta^3(x-y),
\end{aligned}
\label{eq:phi2phi2_main}
\end{equation}
while Eq.~\eqref{eq:Phi3CandPrimaryBracket} gives, by antisymmetry,
\begin{equation}
\{\phi_1(x),\Phi_3^{\rm cand}(y)\}
=
-\Theta^l(y)\,\partial_l^y\delta^3(x-y),
\label{eq:phi1Phi3cand_main}
\end{equation}
and
\begin{equation}
\begin{aligned}
\{\phi_2(x),\Phi_3^{\rm cand}(y)\}
{}&=
\Sigma^{lk}(y)\,\partial_l^y\partial_k^y\delta^3(x-y) \\
&+\Xi^l(y)\,\partial_l^y\delta^3(x-y),
\label{eq:phi2Phi3cand_main}
\end{aligned}
\end{equation}
with
\begin{equation}
\begin{aligned}
\Sigma^{lk}
&:= \frac{1}{2}\Bigl(\theta^{li}\partial_i J^k+\theta^{ki}\partial_i J^l\Bigr), \\
\Xi^l
&:= -\frac{1}{2}\,\theta^{il}\,\partial_i(\partial_\mu J^\mu),
\end{aligned}
\label{eq:SigmaXiDefs_main}
\end{equation}
together with
\begin{equation}
\{\Phi_3^{\rm cand}(x),\Phi_3^{\rm cand}(y)\}=0.
\label{eq:Phi3candSelf_main}
\end{equation}

Equations~\eqref{eq:phi2phi2_main}--\eqref{eq:SigmaXiDefs_main} reveal that the
source dependence of the tertiary stage is not encoded in a single
obstruction, but in three distinct kernels. The kernel $\Theta^l$ controls
both the secondary self-bracket and the primary--tertiary bracket, whereas
the mixed secondary--tertiary bracket depends on $\Sigma^{lk}$ and $\Xi^l$.
These kernels probe different aspects of the prescribed source profile:
$\Theta^l$ measures the variation of the charge density along the
non-commutative directions, $\Sigma^{lk}$ measures the symmetrized spatial
gradients of the current flow, and $\Xi^l$ measures the spatial gradient of
the source divergence. They are not redundant: ordinary current conservation
forces $\Xi^l=0$ but does not by itself imply either $\Theta^l=0$ or
$\Sigma^{lk}=0$; likewise, spatial constancy of $J^k$ may set $\Sigma^{lk}=0$
without constraining the remaining kernels.

This observation induces a hierarchical organization of source sectors. The
generic non-degenerate unrestricted branch is characterized by
$\Theta^l\neq0$. In that case, the source-induced obstruction is already
visible in $\{\phi_2,\phi_2\}$ and $\{\phi_1,\Phi_3^{\rm cand}\}$, and
preservation of $\Phi_3^{\rm cand}$ constrains the primary multiplier
through Eq.~\eqref{eq:Phi3CandEvolutionGeneric}; subject to the properties of
the differential operator $\Theta^l\partial_l$, the algorithm closes by
multiplier determination rather than by an independent quaternary constraint.
No unrestricted first-class interpretation is supported in this sector. By
contrast, when $\Theta^l=0$ but at least one of $\Sigma^{lk}$ or $\Xi^l$
remains non-vanishing, the obstruction disappears from
$\{\phi_2,\phi_2\}$ and $\{\phi_1,\Phi_3^{\rm cand}\}$ but survives in the
mixed bracket $\{\phi_2,\Phi_3^{\rm cand}\}$, so the vanishing of $\Theta^l$
alone is not sufficient to recover a first-class structure.

At the stage reached here, the manuscript does \emph{not} claim a complete
first-/second-class classification of the unrestricted source sector. What
has been established is the equal-time algebra available up to the tertiary
stage and the associated consistency structure, which should be read as the
canonical information presently available for the unrestricted problem rather
than as an exhaustive classification of the external-current theory.

\noindent\textbf{Regime change and outlook for the reduced description.}
The analysis of Secs.~\ref{sec:Phi3CandEvolution}--\ref{sec:SectorStructure}
produces two structurally distinct outcomes. For the generic non-degenerate
inhomogeneous branch $\Theta^l\neq 0$, preservation of $\Phi_3^{\rm cand}$
determines the primary multiplier along $\Theta^l\partial_l$ rather than
yielding an independent quaternary constraint, and no standard first-class
classification is available without further input. For source profiles
satisfying simultaneously
\begin{equation}
\Theta^l=0,\qquad
\Sigma^{lk}=0,\qquad
\Xi^l=0,\qquad
\Phi_3^{\rm cand}\equiv0,
\label{eq:FirstClassSectorConditions}
\end{equation}
the displayed equal-time obstructions disappear, the surviving constraint
pair $(\phi_1,\phi_2)$ becomes first-class in the strict equal-time sense,
and the standard machinery of gauge generators, Dirac brackets, and reduced
dynamics applies unchanged.

To see what these conditions select, consider the concrete case of purely
$x^1$--$x^2$ non-commutativity, $\theta^{12}=-\theta^{21}=\theta$ with all
other components zero. The kernel $\Theta^l=\theta^{li}\partial_iJ^0$ then
requires $\partial_1 J^0=\partial_2 J^0=0$: the charge density must be
uniform across the non-commutative plane. Likewise, $\Sigma^{lk}=0$ forces
all spatial current components to be independent of $x^1$ and $x^2$, and
$\Xi^l=0$ demands the same of $\partial_\mu J^\mu$. Any source of the form
\begin{equation}
J^\mu=\bigl(\rho(t,x^3),\,0,\,0,\,j(t,x^3)\bigr),
\qquad
\partial_0\rho+\partial_3 j=0,
\label{eq:AdmissibleSourceExample}
\end{equation}
automatically satisfies the first three conditions in
Eq.~\eqref{eq:FirstClassSectorConditions}. Under these conditions, the
remaining requirement $\Phi_3^{\rm cand}\equiv 0$ reduces to ordinary current
conservation $\partial_\mu J^\mu=0$, which is precisely the continuity
equation already imposed in Eq.~\eqref{eq:AdmissibleSourceExample}. The
physical picture is therefore a conserved charge--current system whose
profile may vary along the direction parallel to
$\boldsymbol{\theta}$ but is uniform across the non-commutative plane. In
particular, the sector conditions do not require $J^\mu$ to vanish: they
restrict its \emph{derivative profile} (spatial gradients and divergence) in
the non-commutative directions, while the pointwise values $J^0$, $J^k$
remain non-trivial. This distinction is important for reading the reduced
Hamilton equations of Sec.~\ref{sec:ReducedDynamics}, where $J^\mu$ appears
in its general functional form even though the brackets through which the
equations are derived exist only for source profiles of the type
\eqref{eq:AdmissibleSourceExample}.

The remainder of the paper,
Sec.~\ref{sec:ReducedDescription}, develops this reduced description in that
restricted subcase and uses it to confront the Banerjee route with the
gauge-invariant equation-level formulation of Ref.~\cite{Adorno2011CNE}. No
claim about the unrestricted external-current theory is made in what follows;
the sectorial caveat is declared once here and not repeated subsection by
subsection.

\section{Reduced phase-space description in the restricted first-class sector}
\label{sec:ReducedDescription}
%==============================================================================

All constructions in this section presuppose the sector conditions of
Eq.~\eqref{eq:FirstClassSectorConditions}, under which the surviving
equal-time constraint pair is $(\phi_1,\phi_2)$ and no further independent
consistency condition arises at the level analyzed in
Sec.~\ref{sec:DiracBergmann}. The reduced Hamiltonian structure, gauge
generator, Dirac brackets and reduced dynamics obtained below are therefore
claimed only for this restricted sufficient subcase and are \emph{not}
presented as statements about the unrestricted external-current problem.

\subsection{Sector conditions, gauge generator, and degree-of-freedom count}
\label{sec:GaugeGeneratorDOF}

Within the restricted sector defined by
Eq.~\eqref{eq:FirstClassSectorConditions}, the gauge generator associated with
the surviving first-class pair $(\phi_1,\phi_2)$ can be introduced in the
standard Castellani-type form
\begin{equation}
G[\epsilon_0,\epsilon]
=
\int d^3x\,\bigl[\epsilon_0(x)\phi_1(x)+\epsilon(x)\phi_2(x)\bigr],
\label{eq:GaugeGenerator}
\end{equation}
where $\epsilon_0$ and $\epsilon$ are arbitrary test functions at fixed time.
Using the canonical equal-time brackets, the gauge variations of the basic
variables read
\begin{align}
\delta A_0 &= \epsilon_0,
&
\delta A_i &= \partial_i\epsilon,
\nonumber\\
\delta \pi^0 &= 0,
&
\delta \pi^i &= \frac{1}{2}\theta^{ik}\partial_k(\epsilon J^0),
\label{eq:GaugeDeltaPi}
\end{align}
and reduce to the standard canonical $U(1)$ transformations in the commutative
limit. These transformations describe the residual equal-time gauge freedom of
the dynamical sector in the restricted setting.

The independent physical variables are the transverse components of the gauge
field and their conjugate momenta. Starting from the eight canonical field
variables $(A_\mu,\pi^\mu)$, the surviving pair $(\phi_1,\phi_2)$ provides
two first-class constraints, so that the reduced number of physical
phase-space degrees of freedom is
\begin{equation}
N_{\mathrm{phys}}
=\frac{1}{2}(8-2\times 2)=2,
\label{eq:DOFcount}
\end{equation}
corresponding to two propagating configuration-space degrees of freedom. The
explicit gauge fixing, the corresponding Dirac reduction, and the resulting
reduced brackets are addressed in Sec.~\ref{sec:ReducedBrackets}.

\subsection{Dirac brackets in Coulomb gauge}
\label{sec:ReducedBrackets}

To perform the reduction, we impose the Coulomb-type gauge-fixing conditions
\begin{equation}
\chi_1 := A_0 \approx 0,
\qquad
\chi_2 := \partial_i A^i \approx 0.
\label{eq:GaugeFixing}
\end{equation}
Together with the surviving pair $(\phi_1,\phi_2)$, these conditions define
the second-class set
\begin{equation}
\Gamma_a=(\phi_1,\phi_2,\chi_1,\chi_2),
\label{eq:SecondClassSet}
\end{equation}
with
\begin{equation}
\phi_1=\pi^0,\qquad
\phi_2=\partial_i\pi^i-J^0+\frac{1}{2}\theta^{kl}\partial_l(A_kJ^0).
\label{eq:FirstClassPairRepeated}
\end{equation}
In the restricted sector, the non-vanishing equal-time brackets among these
quantities are the standard ones,
\begin{equation}
\begin{aligned}
\{\phi_1(x),\chi_1(y)\}&=-\delta^3(x-y), \\
\{\phi_2(x),\chi_2(y)\}&=-\nabla^2\delta^3(x-y),
\label{eq:ConstraintGaugeBrackets}
\end{aligned}
\end{equation}
while the source-dependent obstructions displayed in the unrestricted problem
have already vanished. The corresponding constraint matrix is therefore
invertible in the standard equal-time sense.

The associated Dirac bracket is
\begin{equation}
\begin{aligned}
\{F,G\}_D
={}&
\{F,G\}
-\int d^3u\,d^3v\,
\{F,\Gamma_a(u)\}\nonumber\\
&\qquad\qquad\,C^{-1}_{ab}(u,v)\,\{\Gamma_b(v),G\},
\label{eq:DiracBracketDefinition}
\end{aligned}
\end{equation}
where $C_{ab}(x,y)=\{\Gamma_a(x),\Gamma_b(y)\}$ and $C^{-1}_{ab}$ denotes
its inverse. The resulting brackets among the independent transverse variables
take the familiar Coulomb-gauge form,
\begin{equation}
\{A_i(x),A_j(y)\}_D=0,
\label{eq:DBAA}
\end{equation}
\begin{equation}
\{\pi_i(x),\pi_j(y)\}_D=0,
\label{eq:DBPiPi}
\end{equation}
and
\begin{equation}
\{A_i(x),\pi_j(y)\}_D
=
\left(\delta_{ij}
-\frac{\partial_i\partial_j}{\nabla^2}\right)\delta^3(x-y).
\label{eq:DBAPi}
\end{equation}
Accordingly, only the transverse components of the gauge field and canonical
momentum remain as independent reduced variables. These brackets provide the
canonical input for the reduced Hamiltonian evolution discussed in the next
subsection.

\subsection{Reduced Hamilton equations and Maxwell-like rewriting}
\label{sec:ReducedDynamics}

With the Dirac brackets in place, the reduced Hamiltonian generates the
equal-time evolution of the independent phase-space variables through
\begin{align}
\dot{A}_s &= \{A_s, H_c\}_*\,, \label{eq:AdotGeneral} \\
\dot{\pi}^s &= \{\pi^s, H_c\}_*\,. \label{eq:PidotGeneral}
\end{align}
A remark on the logical status of the source terms is in order. The Dirac
bracket $\{,\}_*$ constructed in Sec.~\ref{sec:ReducedBrackets} exists only
in the restricted sector~\eqref{eq:FirstClassSectorConditions}, because
the invertibility of the constraint matrix in its standard Coulomb-gauge form
requires $\Theta^l=\Sigma^{lk}=\Xi^l=0$. The canonical Hamiltonian
$H_c=H_{\text{ws}}+H_J$, however, retains the full source coupling: the
sector conditions constrain the admissible \emph{profiles} of $J^\mu$---its
spatial derivatives and divergence---but do not set $J^\mu$ itself to zero.
The resulting Hamilton equations therefore contain the external current in its
general functional form, while remaining valid only for source profiles
satisfying Eq.~\eqref{eq:FirstClassSectorConditions}. For concreteness, the
admissible class includes the conserved, NC-plane-uniform profiles
illustrated in Eq.~\eqref{eq:AdmissibleSourceExample}: the source terms
$J^0$, $J^k$ in the equations below are non-vanishing but have restricted
spatial gradients. The equations are presented in this unsimplified form to
make the comparison with the Euler--Lagrange system of
Sec.~\ref{sec:ELBanerjee} as direct as possible; no further reduction using
the sector conditions on the source terms is attempted here.

At first order in $\theta$, the reduced evolution is most compactly written as
\begin{equation}
\dot{A}_s = \eta^{sc}\,\Lambda_c\,,
\label{eq:Adot}
\end{equation}
with
\begin{align}
\Lambda^c
={}&
-\eta_{00}\Bigl[
\bigl(1 - \mathbf{B}\cdot\boldsymbol{\theta}\bigr)\pi^c
+ (\boldsymbol{\pi}\cdot\boldsymbol{\theta})B^c
+ (\mathbf{B}\cdot\boldsymbol{\pi})\theta^c\Bigr]
\nonumber\\
&+ \tfrac{1}{2}\eta_{00}J^0(\mathbf{A}\times\boldsymbol{\theta})^c\,.
\label{eq:LambdaVector}
\end{align}
For the momentum one finds
\begin{align}
\dot{\pi}^s
&= -\varepsilon^{nsa}\partial_n\Bigl[
\bigl(1 + \mathbf{B}\!\cdot\!\boldsymbol{\theta}\bigr)F^{ns}
+ (\boldsymbol{\pi}\!\cdot\!\boldsymbol{\theta})\pi_a
\nonumber\\
&\qquad\quad\;
- \tfrac{1}{2}\bigl(\eta_{00}\pi_j\pi^j - B_jB^j\bigr)\theta_a\Bigr]
\nonumber\\
&\quad + J^0\theta^{sl}\pi_l
- J^j\theta^{sl}\partial_j A_l - J^s\,.
\label{eq:PidotVector}
\end{align}
Equations~\eqref{eq:Adot}--\eqref{eq:PidotVector} already show that the
reduced electric sector is not obtained from the Maxwell relation
$E^i=\pi^i$ alone, but from a source- and field-dependent deformation of that
identification at first order in $\theta$. In the commutative limit,
Eqs.~\eqref{eq:LambdaVector} and~\eqref{eq:PidotVector} reduce to the
standard Hamilton equations of Maxwell theory with external sources.

Two structural features of the source sector should be noted at the outset.
First, the source terms in Eqs.~\eqref{eq:LambdaVector}
and~\eqref{eq:PidotVector} contain the gauge potential $A_j$ explicitly---not
only through the field strength $F_{jl}$. In $\Lambda^c$, the term
$\tfrac{1}{2}\eta_{00}J^0(\mathbf{A}\times\boldsymbol{\theta})^c$ depends on
$A_j$ itself; in $\dot\pi^s$, the combination
$-J^j\theta^{sl}\partial_jA_l$ involves $\partial_jA_l$ rather than
$F_{jl}=\partial_jA_l-\partial_lA_j$. These terms are the Hamiltonian
counterpart of the gauge-non-covariant source coupling identified at the
Lagrangian level by Adorno \emph{et al.}~\cite{Adorno2011CNE}. Within the
present gauge-fixed formulation ($A_0=0$, $\partial_kA^k=0$), the $A_j$ that
appears is the physical transverse potential and the equations are
self-consistent; but the appearance of the undifferentiated potential---rather
than the field strength---in the source terms signals that the underlying
action-level realization is not gauge covariant when external currents are
present. The precise decomposition of this non-covariance into a
scheme-dependent gauge-invariant piece and an explicitly gauge-dependent piece
is carried out in the on-shell comparison of Sec.~\ref{sec:OnShellComparison}.
Second, the field-strength bilinear sector (all terms not involving $J^\mu$) is
manifestly gauge invariant, as it depends only on $F_{ij}$, $B^i$, and
$\pi^i$. The asymmetry between gauge-invariant kinematics and
gauge-non-covariant source coupling will be reflected in the on-shell
comparison, where the Gauss sector is recovered exactly while the Amp\`ere
sector retains irreducible residues.

\noindent\textbf{Maxwell-like rewriting of the reduced dynamics.}
The reduced Hamilton equations admit a Maxwell-like rewriting that is
convenient for the comparison with the mapped Euler--Lagrange system carried
out in Secs.~\ref{sec:ELBanerjee}--\ref{sec:OnShellComparison}. Using
Eq.~\eqref{eq:EB} and the solution~\eqref{eq:F0i_solution} for $F^{0i}$, the
electric field may be written as
\begin{equation}
\begin{aligned}
E^i
={}&\, \pi^k F_{kl}\theta^{li}
+ \pi^k F_{ml}\eta^{im}\eta_{nk}\theta^{ln}
\\
&\quad - \pi^i\left(\tfrac{1}{2}F_{kl}\theta^{lk}-1\right)
+ \tfrac{1}{2}\,\theta^{ji}A_j J^0\,.
\end{aligned}
\label{eq:ElectricField}
\end{equation}
Note that the last term, $\tfrac{1}{2}\theta^{ji}A_jJ^0$, makes the
effective electric field explicitly dependent on the gauge potential in the
presence of sources; in the source-free limit $J^0\to0$ this term vanishes
and $E^i$ becomes a function of $F_{\mu\nu}$ and $\pi^k$ alone.
Hence
\begin{equation}
\dot{A}_s = \eta_{00}E_s\,.
\label{eq:AdotEqualsE}
\end{equation}

The magnetic equations remain kinematic:
\begin{equation}
\partial_0\mathbf{B} = -\nabla\times\mathbf{E}\,,
\label{eq:FaradayLaw}
\end{equation}
and
\begin{equation}
\nabla\cdot\mathbf{B} = 0\,.
\label{eq:GaussLawMagnetism}
\end{equation}

The physical Gauss law should not be inferred from the divergence of
Eq.~\eqref{eq:AdotEqualsE}. That equation governs the time evolution of the
spatial gauge potential $A_i$, whereas Gauss's law belongs to the canonical
constraint structure of the theory. In the present Hamiltonian formulation,
the Gauss condition is encoded directly in the secondary constraint obtained
from preserving the primary constraint $\pi^0\approx0$, namely
\begin{equation}
\partial_i\pi^i
=
J^0 - \frac{1}{2}\,\theta^{kl}\,\partial_l(A_k J^0)\,.
\label{eq:GaussConstraint}
\end{equation}
Thus, Gauss's law is not an independent consequence of the evolution equation
for $A_i$, but a restriction that defines the admissible phase-space
configurations at each time slice. Its role is to constrain the longitudinal
part of the canonical momentum in the presence of the prescribed external
charge density, including the first-order non-commutative correction.

Identifying the reduced electric-displacement variable with the canonical
momentum,
\begin{equation}
D^s \equiv \pi^s,
\label{eq:DField}
\end{equation}
Eq.~\eqref{eq:PidotVector} can be rewritten in Maxwell-like form as
\begin{equation}
(\nabla\times\mathbf{H})^s = T^s - \partial_0 D^s,
\label{eq:AmpereMaxwellNC}
\end{equation}
with
\begin{align}
\mathbf{H}
&= \bigl[1 + B_j\theta^j\bigr]\mathbf{B}
+ \eta_{00}(\boldsymbol{\pi}\cdot\boldsymbol{\theta})\boldsymbol{\pi}
\nonumber\\
&\quad + \tfrac{1}{2}(B_a B^a)\boldsymbol{\theta}
- \tfrac{1}{2}\eta_{00}(E_j E^j)\boldsymbol{\theta}\,,
\label{eq:HField} \\
T^s
&= J^0\,\theta^{sl}\,\pi_l - J^j\,\theta^{sl}\,\partial_j A_l - J^s\,.
\label{eq:SourceTerm}
\end{align}
Equation~\eqref{eq:AmpereMaxwellNC} is not an independent constitutive law
but a compact rewriting of the reduced Hamilton equation for $\pi^s$; the
notation is organizational. The source sector $T^s$ contains the external
current in its general functional form; the sector conditions
\eqref{eq:FirstClassSectorConditions} constrain only the derivative profile
of $J^\mu$, not its pointwise values, so $J^0$, $J^j$, $J^s$ are not
simplified here. Even so, $T^s$ already contains both gauge-invariant and
explicitly gauge-dependent contributions, which anticipates the asymmetric
on-shell comparison with the Euler--Lagrange equations: the Gauss sector is
anchored directly in the secondary constraint, whereas the Amp\`ere-like
sector inherits the full source-dependent deformation of the reduced
Hamiltonian evolution.
The term $\eta_{00}(\boldsymbol{\pi}\cdot\boldsymbol{\theta})\boldsymbol{\pi}$
in $\mathbf{H}$ encodes the anisotropic deformation of the effective magnetic
permeability induced by the NC structure: with $\eta_{00}=+1$ and
$\boldsymbol{\pi}\to\mathbf{E}$ on shell, this becomes
$+(\mathbf{E}\cdot\boldsymbol{\theta})\mathbf{E}$, coupling the electric
sector to the NC direction $\boldsymbol{\theta}$ already at the constitutive
level. The commutative limit $\theta\to 0$ recovers $\mathbf{H}=\mathbf{B}$,
as required.

This Maxwell-like representation should not be overinterpreted. In
particular, the source sector $T^s$ is not gauge covariant: it contains
the gauge potential $A_j$ through the term $-J^j\theta^{sl}\partial_jA_l$,
which does not combine into $F_{jl}$. Accordingly, the Amp\`ere-like
equation~\eqref{eq:AmpereMaxwellNC} does not have the same gauge-invariance
status as its commutative counterpart. The representation is nevertheless
useful because it separates the gauge-invariant field-strength block
($\mathbf{H}$, $D^s$) from the gauge-non-covariant source block ($T^s$) in a
form that makes the comparison with the mapped Euler--Lagrange structure
(Sec.~\ref{sec:ELBanerjee}) and the on-shell gauge-covariance analysis
(Sec.~\ref{sec:OnShellComparison}) as direct as possible. Its meaning is
therefore internal to the restricted sufficient first-class subcase.

\subsection{Comparison with the Banerjee Euler--Lagrange system (off-shell)}
\label{sec:ELBanerjee}

Having rewritten the reduced Hamilton equations in a Maxwell-like form, we
now compare them with the Euler--Lagrange equations of the mapped theory.
The point of the comparison is limited but structural: it tests how far the
reduced canonical dynamics reproduces the field-equation content of the
mapped Banerjee theory once the source-dependent obstructions have been
removed by the sector conditions of Eq.~\eqref{eq:FirstClassSectorConditions}.
The logical relation among the relevant formulations is
\begin{equation}
\begin{gathered}
\underbrace{\dot\pi^s = \{\pi^s,H_c\}_*}_{\text{Hamilton (this work)}}
\;\stackrel{\text{Legendre}}{\longleftrightarrow}\;
\underbrace{\mathcal{E}^\nu_B = 0}_{\text{Banerjee EL}}
\\[4pt]
\longrightarrow\;
\underbrace{\mathcal{E}^\nu_A = 0}_{\text{Adorno Eq.~(24)}}
\;\dashrightarrow\;
\underbrace{\text{Adorno Eqs.~(16)--(17)}}_{\text{gauge-inv.\ target}}
\end{gathered}
\label{eq:LogicalChain}
\end{equation}
Here the solid arrows denote exact equivalence or direct comparison inside
the restricted branch, whereas the dashed arrow denotes only a partial
on-shell correspondence between the action-level Banerjee route and the
gauge-covariant equation-level route of Adorno \emph{et al.}

Applying the Euler--Lagrange operator to
$\mathcal{L}=\mathcal{L}_{\text{ws}}+\mathcal{L}_J$, one obtains
\begin{align}
\mathcal{E}^\nu_B
&= \partial_\mu F^{\mu\nu}
+ \theta^{\alpha\beta}\Bigl[
\tfrac{1}{4}\partial^\nu(F_{\alpha\mu}F_\beta{}^\mu)
- \partial_\mu(F^\mu_{\ \alpha}F^\nu_{\ \beta})
\nonumber\\ & + F_{\mu\alpha}\partial_\beta F^{\mu\nu}\Bigr]
- J^\nu
- \theta^{\nu\alpha}J^\beta F_{\alpha\beta}
- \theta^{\alpha\beta}(\partial_\alpha J^\nu)A_\beta
\nonumber\\
& - \theta^{\alpha\beta}J^\nu\partial_\alpha A_\beta
+ \tfrac{1}{2}\theta^{\nu\beta}(\partial_\mu J^\mu)A_\beta
= 0\,.
\label{eq:ELBanerjee}
\end{align}
The first two lines constitute the common field-strength bilinear sector
(same in both the Banerjee and Adorno formulations); the remaining
source-dependent terms contain explicit gauge potentials and are therefore
gauge-non-covariant for external fixed currents. The term
$\tfrac{1}{2}\theta^{\nu\beta}(\partial_\mu J^\mu)A_\beta$ vanishes
when the external current is conserved but must be retained in the general
off-shell expression.

For comparison, Adorno \emph{et al.} obtain, with the minimal current map,
\begin{align}
\mathcal{E}^\nu_A
&= \partial_\mu\Bigl[
  f^{\mu\nu}\Bigl(1 + \tfrac{g}{2}\theta^{\alpha\beta}f_{\alpha\beta}\Bigr)
\Bigr]
\nonumber\\
&\quad - g\theta^{\alpha\beta}\partial_\mu(f^{\mu\nu}f_{\alpha\beta})
- g\theta^{\alpha\beta}\partial_\mu(f^{\mu\alpha}f^{\nu\beta})
\nonumber\\
&\quad + g\theta^{\alpha\beta}\partial_\beta(f^{\mu\nu}f_{\mu\alpha})
- \tfrac{g}{4}\theta^{\alpha\beta}\partial_\alpha(f_{\mu\nu}f^{\mu\nu})
\nonumber\\
&\quad - \frac{4\pi}{c}\,j^\nu\Bigl(1+\tfrac{g}{2}\theta^{\alpha\beta}f_{\alpha\beta}\Bigr)
\nonumber\\
&\quad - \frac{4\pi g}{c}\theta^{\alpha\beta}\Bigl[
  f_{\alpha\beta}j^\nu + f^\nu{}_{\alpha}j_\beta
\nonumber\\
&\qquad\quad
  + A_\alpha\partial_\beta j^\nu
  - \tfrac12 A^\nu\partial_\alpha j_\beta
\Bigr]
= 0\,.
\label{eq:ELAdorno24}
\end{align}
The field-strength sector agrees in both formulations because it comes from the
same SW-mapped Maxwell block $\mathcal{L}_{\text{ws}}$. The difference lies entirely
in the source sector: the Banerjee map generates, in addition to $-J^\nu$, the
gauge-non-covariant terms
$-\theta^{\nu\alpha}J^\beta F_{\alpha\beta}$,
$-\theta^{\alpha\beta}(\partial_\alpha J^\nu)A_\beta$, and
$-\theta^{\alpha\beta}J^\nu\partial_\alpha A_\beta$.
The last two contain the gauge potential explicitly and produce the
gauge-covariance obstruction identified by Adorno \emph{et al.}
Thus, the comparison with Adorno must be understood as a comparison between
different admissible SW-map choices, not as an identity between the two
Lagrangian formulations. In the present restricted sector, the reduced
Hamilton equations reproduce the common field-strength content of the mapped
Euler--Lagrange system, while the residual source-dependent mismatch measures
the limitation of the reduced Banerjee description rather than an inconsistency
of the unrestricted theory as a whole.

\subsection{On-shell comparison in the reduced regime}
\label{sec:OnShellComparison}

Section~\ref{sec:ELBanerjee} compared the reduced Hamilton equations with the
mapped Euler--Lagrange equations at the level of their explicit component
form. We now evaluate that comparison on shell, under the gauge conditions
$A_0=0$ and $\partial_kA^k=0$, working throughout to first order in $\theta$.
The temporal and spatial sectors do not behave symmetrically: the temporal
comparison is anchored directly in the secondary constraint, whereas the
spatial comparison inherits the full source-dependent deformation of the
reduced Hamiltonian evolution.

\noindent\textbf{Spatial component: partial correspondence.}
From Eq.~\eqref{eq:Pi_i_final}, the canonical momentum on shell is
\begin{equation}
\pi^i
=
F^{0i}\bigl[1+\mathcal{O}(\theta)\bigr]
+\tfrac12\,\theta^{ji}A_jJ^0\,.
\label{eq:piOnShell}
\end{equation}
Substituting into Eq.~\eqref{eq:PidotVector}, the field-strength sector yields
exactly the bilinear structure appearing on the left-hand side of the
gauge-invariant field equation of Ref.~\cite{Adorno2011CNE}. This part of the
correspondence is exact within the restricted sufficient first-class regime.

The source contribution is
\begin{equation}
T^s
=\, J^0\theta^{sl}\pi_l - J^j\theta^{sl}\partial_j A_l - J^s\,,
\label{eq:TsSource}
\end{equation}
which can be rewritten as
\begin{equation}
T^s
=\, \bigl(J^\mu\theta^{sl}F_{\mu l} - J^s\bigr) - J^j\theta^{sl}\partial_l A_j\,.
\label{eq:TsDecomposed}
\end{equation}
Two independent obstructions remain:

\emph{(i) Scheme-dependent gauge-invariant obstruction.}
The term $J^\mu\theta^{sl}F_{\mu l}$ is gauge invariant but does not appear
on the right-hand side of the gauge-invariant equation-level formulation of
Ref.~\cite{Adorno2011CNE}. It reflects the Banerjee-map choice and cannot be
removed by the constraint projection.

\emph{(ii) Gauge-dependent obstruction.}
The term $-J^j\theta^{sl}\partial_l A_j$ contains $A_j$ explicitly and
remains gauge dependent. Under $A_j\to A_j+\partial_j\Lambda$ it transforms as
\begin{equation}
\begin{aligned}
-J^j\theta^{sl}\partial_l A_j
&\;\longrightarrow\;
-J^j\theta^{sl}\partial_l A_j - J^j\theta^{sl}\partial_l\partial_j\Lambda\,,
\end{aligned}
\label{eq:GaugeDep}
\end{equation}
and the residual term does not vanish generically. Hence the spatial
Hamiltonian equation does not reduce to the gauge-invariant Amp\`ere equation
of Ref.~\cite{Adorno2011CNE}: what survives is an exact identification of
the common field-strength bilinear sector, but not a full reduction of the
spatial Hamiltonian equation to the gauge-invariant Amp\`ere equation.

\noindent\textbf{Temporal component: exact Gauss-law recovery.}
The temporal component is different. On the constraint surface,
\begin{equation}
\partial_i\pi^i
=
J^0 - \tfrac{1}{2}\theta^{kl}\partial_l(A_k J^0)\,.
\label{eq:GaussOnShell}
\end{equation}
Using the decomposition~\eqref{eq:KofFImplicit}--\eqref{eq:KofFDef} of the
spatial canonical momentum introduced in Sec.~\ref{sec:CurrentConsistency},
one obtains
\begin{equation}
\begin{split}
\partial_i\pi^i
={}& \partial_i F^{0i}
  + \partial_i \mathcal{K}^i(F)
  + \tfrac{1}{2}\theta^{ji}(\partial_i A_j)\,J^0
 \\ & + \tfrac{1}{2}\theta^{ji}A_j\,\partial_i J^0\,.
\end{split}
\label{eq:divpi_expanded}
\end{equation}
The two source-dependent terms in Eq.~\eqref{eq:divpi_expanded} cancel
exactly against the corresponding terms on the right-hand side of
Eq.~\eqref{eq:GaussOnShell}: the $(\partial_i A_j)J^0$ pair cancels by
antisymmetry of $\theta^{ji}$, and the $A_k\partial_l J^0$ pair cancels
identically. The remainder contains only the field-strength bilinear block
$\partial_i F^{0i}+\partial_i\mathcal{K}^i(F)$ and no explicit potential
terms, and therefore coincides with the $\mu=0$ component of the
gauge-invariant formulation of Ref.~\cite{Adorno2011CNE}. This is the exact
on-shell result within the restricted sufficient first-class regime.

\paragraph{NC-deformed conservation on the constraint surface.}
By Proposition~\ref{prop:DivELIdentity}, requiring the third-stage
consistency condition $\Phi_3^{\rm cand}\approx0$ is equivalent to requiring
that the mapped field equations be divergence-consistent once their
divergence is rewritten in the same canonical phase-space variables and
truncated consistently at first order in $\theta$. At first order this
amounts to a source-compatibility requirement generated on shell by the
algorithm, not to ordinary current conservation imposed from outside. In
the restricted sufficient first-class subcase studied here, no separate
reduced tertiary representative is introduced: one must instead impose
directly that the full tertiary condition vanish identically together with
the kernel conditions displayed in Sec.~\ref{sec:SectorStructure}.

Taken together, the main-text identity of Sec.~\ref{sec:MainIdentity} and
the on-shell comparison above locate the source-induced obstruction of the
action-level Banerjee realization in two complementary places: at the
\emph{tertiary} stage of the Dirac chain for the unrestricted problem, and
in the \emph{spatial Amp\`ere-type sector} of the reduced dynamics for the
restricted first-class subcase. The temporal sector, anchored directly in
the secondary constraint, reproduces the gauge-invariant Gauss law exactly
on shell; the spatial sector retains two irreducible residues---one
scheme-dependent and gauge invariant, inherited from the current-map
ambiguity, and one explicitly gauge dependent. The following section
examines what this canonical diagnosis implies for the Lagrangian mismatch
identified by Adorno \emph{et al.}, and delimits the scope of the present
results with respect to alternative SW current maps and action-level
realizations.

\section{Discussion}
\label{sec:Discussion}
%==============================================================================

\subsection{Canonical \emph{versus} Lagrangian locus of the obstruction}
\label{sec:Discussion-CanonicalLocus}

The analysis of Sec.~\ref{sec:DiracBergmann} assigns a precise canonical
meaning to the source-induced obstruction. In the unrestricted problem, the
preservation of the secondary Gauss-type constraint produces the third-stage
identity~\eqref{eq:MainIdentityDivEL}. Its zeroth-order part is the standard
compatibility term $\partial_\mu J^\mu$, while its full first-order content
is the NC-deformed source-compatibility condition generated on shell by the
algorithm. The crucial point is that this quantity is simultaneously the
next object in the Dirac chain and, after canonical pullback, the divergence
of the Banerjee Euler--Lagrange equations. The generic source sector
therefore closes by multiplier fixing rather than by the appearance of a
universal quaternary constraint, while first-class reduced behavior survives
only as a derived special subcase. The Hamiltonian analysis thus sharpens
the Lagrangian obstruction identified by Adorno \emph{et al.}: what appears
there as a failure of gauge covariance in the action-level route reappears
here as the explicit point at which source compatibility enters the Dirac
consistency chain. More precisely, the canonical formulation provides three
layers of information that are not accessible from the Lagrangian side alone:
it \emph{localizes} the obstruction at a definite stage of the Dirac chain
(the tertiary level, via the identity~\eqref{eq:MainIdentityDivEL}); it
\emph{classifies} the source dependence through the hierarchical kernel
structure $(\Theta^l,\Sigma^{lk},\Xi^l)$ of
Sec.~\ref{sec:SectorStructure}; and it \emph{decomposes} the on-shell
comparison into a Gauss sector that is recovered exactly and an Amp\`ere
sector that retains irreducible residues
(Sec.~\ref{sec:OnShellComparison}). None of these distinctions has a
direct counterpart in the equation-level analysis of
Ref.~\cite{Adorno2011CNE}.

The asymmetry between the temporal and spatial on-shell comparisons in
Sec.~\ref{sec:OnShellComparison} is not incidental. The temporal component
is controlled by the secondary constraint itself, which by construction
imposes the NC-deformed Gauss law. The spatial comparison, by contrast,
requires the full reduced Hamilton equations for $\dot\pi^s$, whose source
sector inherits both the Banerjee-map choice and the canonical deformation
of $\pi^i$. Within the Banerjee realization studied here, it is precisely
that map dependence---in particular the $F_{\alpha\beta}$-dependent terms
absent in the minimal Adorno map---that produces the residual
gauge-invariant obstruction in the Amp\`ere sector. The present analysis
therefore supports a limited conclusion: for this specific first-order
action-level current map, the reduced spatial sector does not reproduce the
gauge-covariant target equations of Ref.~\cite{Adorno2011CNE} for generic
source profiles. No broader no-go statement is claimed for other SW current
maps or for other action-level realizations not analyzed here.

\subsection{Comparison with prior external-source analyses}
\label{sec:Discussion-PriorWork}

A related structural remark concerns the role of a temporal source
component. The Cabo--Shabad construction is formulated under the
restriction $J_4^{\,a}=0$, which prevents the temporal source density from
shifting Gauss's law at the stage where their source-dependent secondary
condition emerges. Had such a component been retained, the corresponding
consistency equation would no longer remain purely spatial. The present
SW-mapped model shows explicitly how that source-compatibility content can
instead reappear at a later stage of the chain, through the tertiary-stage
candidate $\Phi_3^{\rm cand}$.

The source-dependent comparison with earlier literature should therefore
be read with some care. Cabo--Shabad remain the closest canonical analogue,
because in both analyses a prescribed external current enters the
consistency chain itself; the difference is that their non-Abelian
construction is written with $J_4^{\,a}=0$, so the source-dependent
condition appears already at the secondary stage, whereas here the temporal
density $J^0$ and the spatial current $J^k$ survive together and the
corresponding compatibility content is first isolated at the tertiary
stage~\cite{CaboShabad}. Przeszowski's analysis of Yang--Mills theory with
arbitrary external sources points in the same methodological direction:
there too the source sector cannot be treated as canonically innocuous, and
a modified source coupling is introduced before the quantization procedure
is organized~\cite{Przeszowski1988}. Sikivie and Weiss provide a different
kind of comparison, since their static external sources are used to define
physically distinct classical solution sectors rather than to study the
first-/second-class constraint split~\cite{SikivieWeiss1978}. By contrast,
Kruglov's NC Maxwell model employs a prescribed Abelian current at the
Lagrangian level, but leaves aside the Dirac--Bergmann question of how the
source deforms the canonical chain~\cite{Kruglov2002}. The gain of the
present Hamiltonian analysis is precisely to make that source dependence
explicit in a setting where the current is fixed, Abelian, SW-mapped, and
allowed to carry both temporal and spatial components.

\subsection{Scope, limitations, and outlook}
\label{sec:Discussion-Scope}

The results reported here are confined to the first-order action-level
Banerjee realization of NC $U(1)$ electrodynamics with purely space--space
non-commutativity ($\theta^{0i}=0$) and a prescribed, non-dynamical
external current $J^\mu$. Within that scope, the main identity
\eqref{eq:MainIdentityDivEL} is algebraic: it holds as a statement in the
$\mathcal{O}(\theta)$ canonical variables, without additional assumptions
on the source profile. The generic closure by multiplier fixing
(Sec.~\ref{sec:Phi3CandEvolution}) is itself a structural result: it
establishes that, for the Banerjee realization with $\Theta^l\neq0$, the
Dirac algorithm terminates without producing an independent quaternary
constraint, and that the obstruction is absorbed into the primary multiplier
along the differential operator $\Theta^l\partial_l$. This is not a
limitation of the analysis but a positive characterization of how the
unrestricted external-current theory closes canonically. The sectorial
classification of Sec.~\ref{sec:SectorStructure} and the reduced
phase-space construction of Sec.~\ref{sec:ReducedDescription} are, by
contrast, restricted: the gauge-generator, Dirac-bracket and
Hamilton-equations sector-level statements rely on the sufficient conditions
\eqref{eq:FirstClassSectorConditions}, and should not be transferred to
generic source profiles without further analysis. Higher-order extensions
in $\theta$, temporal non-commutativity $\theta^{0i}\neq0$, and dynamical
(rather than prescribed) sources each raise distinct conceptual issues that
are not addressed here.

Because the central identity is map-dependent (see the remark following
Proposition~\ref{prop:DivELIdentity}), two companion analyses are currently
under investigation. The first extends the present Hamiltonian construction
to the minimal current map of Ref.~\cite{Adorno2011CNE}, in order to
determine whether the residual gauge-invariant obstruction in the Amp\`ere
sector is specific to the Banerjee $F_{\alpha\beta}$-dependent terms or
whether some weaker form of it survives in any action-level SW current map.
The second constructs the canonical counterpart of the equation-level route
itself, \emph{i.e.}\ asks how the gauge-covariant Adorno field equations fit
in a Dirac--Bergmann analysis when the SW map is applied after variation; a
comparison between the two canonical constructions would make the relation
between the two routes quantitative at the phase-space level. The approach
of Refs.~\cite{Brace2001,Barnich2004} to SW-map ambiguities through local
cohomology, together with the current-algebra perspective of
Ref.~\cite{BanerjeeGhosh2002}, suggests natural organizing principles for
such extensions. Neither task is required for the present claim, but
together they would delimit how much of the obstruction identified here is
tied to the Banerjee realization and how much is structural.

\section{Conclusions}
\label{sec:Conclusions}
%==============================================================================

We studied the first-order Banerjee action-level realization of NC Maxwell
theory with a fixed external current in the full Dirac--Bergmann phase space.
The main result is that preservation of the Gauss-type secondary constraint
produces a third-stage source-dependent object whose canonical form is
algebraically identical, after canonical pullback, to the divergence of the
mapped Euler--Lagrange equations. This identity is tied to the specific
$F_{\alpha\beta}$-dependent terms of the Banerjee current map: it
characterizes the canonical fingerprint of that particular realization, not a
universal feature of any action-level SW-mapped theory. The Hamiltonian
analysis thereby provides three layers of information beyond the Lagrangian
diagnosis of Ref.~\cite{Adorno2011CNE}: it localizes the obstruction at the
tertiary stage of the Dirac chain, classifies the source dependence through
the hierarchical kernel structure
$(\Theta^l,\Sigma^{lk},\Xi^l)$, and decomposes the on-shell
comparison into a Gauss sector recovered exactly and an Amp\`ere sector with
irreducible residues.

For generic source profiles with $\Theta^l\neq0$, the tertiary consistency
condition feeds back into the primary multiplier and the algorithm closes by
multiplier determination rather than by a quaternary constraint. This
generic closure is itself a structural result: it establishes how the
unrestricted external-current theory terminates canonically within the
Banerjee realization. A reduced Hamiltonian description is recovered only
after restricting to a sufficient first-class sector in which the displayed
source-dependent kernels and the tertiary object become trivial. The gauge
generator, reduced brackets, and degree-of-freedom count established in the
manuscript are therefore strictly sectorial.

The scope of the paper is correspondingly limited but precise. It does not
provide a complete first-/second-class classification for arbitrary external
currents, nor a general reduced phase-space formulation of the full theory.
Rather, it isolates the source-dependent obstruction in the Dirac chain,
clarifies its map-dependent relation to the Euler--Lagrange structure, and
identifies the restricted sector in which the standard equal-time
Hamiltonian machinery remains available. The extension of this analysis to
the minimal current map of Ref.~\cite{Adorno2011CNE} and to the canonical
counterpart of the equation-level route are currently under
investigation.

%==============================================================================
\section*{Acknowledgements}
The authors welcome the support of the Universidad Juárez Autónoma de Tabasco for providing a suitable work environment while this research was carried out. J.M.C. also thanks the Secretariat of Science, Humanities, Technology, and Innovation (Secihti) of México for their support through a grant for postdoctoral studies under Grant No. 3873825. A.G.A.C. acknowledges support from the National Council of Humanities, Science and Technology (CONAHCYT), through a master's scholarship awarded as part of the national postgraduate funding program.

%==============================================================================

\section*{Statements and Declarations}

\textbf{Competing interests.} The authors declare that they have no competing
financial or non-financial interests that are directly or indirectly related
to the work submitted for publication.

\textbf{Author contributions.} J.M.C. conceived the study, performed the
Dirac--Bergmann analysis and the on-shell comparison, and wrote the
manuscript. A.G.A.C. developed and implemented the symbolic-algebra toolbox
used to verify the displayed formulas, performed computational verification
of the results, and contributed to the manuscript preparation. J.M.P.F.
supervised the project, contributed to the structural analysis of the
constraint chain, and participated in the revision of the manuscript. All
authors reviewed and approved the final version.

%\textbf{Use of AI tools.} Portions of the manuscript text were improved for clarity, style and grammar using an AI-assisted copy-editing tool (Claude, Anthropic). No AI tool was used for autonomous content creation or for generating new scientific results. All mathematical derivations, physical interpretations and conclusions are the sole work of the authors, who takefull responsibility for the content of the manuscript.

%==============================================================================
\section*{Data and Code Availability Statement}
The results reported here are obtained from analytic and symbolic manipulations; no new experimental data were generated. The Matlab symbolic-algebra toolbox used for the constrained Hamiltonian (Dirac--Bergmann) analysis is available at \href{https://github.com/AlejandroAndarcia/MATLAB-Implementation-of-the-Dirac-Bergmann-Algorithm-for-Field-Theories}{https://github.com/AlejandroAndarcia/MATLAB-Implementation-of-the-Dirac-Bergmann-Algorithm-for-Field-Theories} under the license specified in the repository. Documentation, usage examples, and system requirements are provided there. In the present work the code is used only to verify the analytic formulas displayed in the manuscript; no claim depends exclusively on a hidden computational step.

%==============================================================================
\appendix
%==============================================================================

%==============================================================================
\section{Functional identities for the closure calculation}
\label{app:PBclosure}
%==============================================================================

This appendix collects the functional-derivative identities used in the closure calculation. We only need the contracted combinations that enter the Poisson brackets
$\{\phi_2,H_{\text{ws}}\}$ and $\{\phi_2,H_J\}$. The basic functional derivatives
of the secondary constraint are
\begin{align}
\frac{\delta \phi_2(x)}{\delta A_k(y)}
&=
\frac12\,\theta^{kl}\,\partial_l^x
\!\bigl(J^0(x)\,\delta^3(x-y)\bigr),
\label{eq:AppdphiA}
\\
\frac{\delta \phi_2(x)}{\delta \pi^k(y)}
&=
\partial_k^x\,\delta^3(x-y).
\label{eq:Appdphipi}
\end{align}
Inserting these into the canonical field-theory Poisson bracket gives
\begin{equation}
\{\phi_2,H\}
=
\int d^3y\,
\left[
\frac{\delta \phi_2}{\delta A_k(y)}\frac{\delta H}{\delta \pi^k(y)}
-
\frac{\delta \phi_2}{\delta \pi^k(y)}\frac{\delta H}{\delta A_k(y)}
\right].
\label{eq:AppPBFormula}
\end{equation}

For the source-free Hamiltonian, only the following contracted combination is
needed:
\begin{equation}
\begin{aligned}
\{\phi_2,H_{\text{ws}}\}
{}&=
\frac12\,\theta^{kl}(\partial_lJ^0)\,\partial_kA_0
\\ &-\frac12\,\theta^{kl}\eta_{00}\eta_{mk}\partial_l(J^0\pi^m),
\label{eq:AppPBws}
\end{aligned}
\end{equation}
which coincides with Eq.~\eqref{eq:PBws_result} in the main text. The result
arises from the two terms that generate non-zero contributions at
$\mathcal{O}(\theta)$: the linear-in-$\pi$ term $H_A=\pi^k\partial_kA_0$
(via the operator $\{\phi_2^{(1)},H_A\}$) and the quadratic-in-$\pi$ term
$H_D=-\tfrac12\eta_{00}\eta_{ij}\pi^i\pi^j$ (via $\{\phi_2^{(1)},H_D\}$).
All remaining terms of $H_{\text{ws}}$ contribute zero, either by the Ward gauge
identity (terms depending only on $F_{ij}$), by the antisymmetry of
$\theta^{kl}$ against symmetric second derivatives, or because they are
$\mathcal{O}(\theta^2)$.

For the source Hamiltonian, the contraction with
Eqs.~\eqref{eq:AppdphiA}--\eqref{eq:Appdphipi} yields
\begin{equation}
\begin{aligned}
\{\phi_2,H_J\}
={}& -\partial_i J^i
-\tfrac{1}{2}\theta^{il}(\partial_iJ^0)\,\partial_lA_0
\\ &
+ \tfrac{1}{2}\theta^{jl}\partial_i\partial_l(J^iA_j)
- \tfrac{1}{2}\theta^{il}\partial_i(J^k\partial_lA_k)
\\
&
-\tfrac{1}{2}\theta^{il}\eta_{00}\eta_{lk}\partial_i(J^0\pi^k)
+ \tfrac{1}{2}\theta^{il}\partial_i(J^kF_{kl}),
\end{aligned}
\label{eq:AppPBJ}
\end{equation}
which coincides with Eq.~\eqref{eq:PBJ_result} of the main text.
Each line originates from a specific subset of terms in $\mathcal H_J$:
the first term from $A_jJ^j$; the second from $A_jJ^0\partial_lA_0$;
the third from $A_jJ^k\partial_lA_k$ (two contributions combined);
the fourth from $A_jJ^0\eta_{00}\eta_{li}\pi^i$; and the fifth from
$A_jF_{kl}J^k$. The $\pi$-containing terms from $\{\phi_2,H_{\text{ws}}\}$
(Eq.~\eqref{eq:AppPBws}) and from the fourth term here cancel exactly
against each other in the sum $\dot\phi_2$, so that
Eq.~\eqref{eq:PBintermediate} contains no $\pi^k$ at all.

Combining Eqs.~\eqref{eq:AppPBws} and \eqref{eq:AppPBJ} with the explicit
derivative \eqref{eq:PhiExpl} reproduces Eq.~\eqref{eq:PBintermediate}.
The structure of that expression has three key properties. First, all
$\pi^k$-containing terms cancel algebraically between $\{\phi_2,H_{\text{ws}}\}$
and $\{\phi_2,H_J\}$, so the constraint surface $\Gamma_c=\{\phi_1\approx0,
\phi_2\approx0\}$ has no effect on $\dot\phi_2$. Second, the $\mathcal
O(\theta^0)$ piece is $-\partial_\mu J^\mu$: a condition on the external
source, not on the phase space, and the direct analogue of the Maxwell
consistency condition. Third, the $\mathcal{O}(\theta)$ piece contains
$A_\mu$ and $\partial_0J^0$, so the full tertiary candidate depends on the
canonical coordinates but not on the canonical momenta. Accordingly,
$\{\Phi_3^{\rm cand},\Phi_3^{\rm cand}\}=0$ whereas the primary--tertiary
bracket remains
$\{\phi_1,\Phi_3^{\rm cand}\}=-\Theta^l\partial_l\delta$ in the
inhomogeneous-source sector. The next Dirac step therefore feeds back into the
primary multiplier through the preservation of $\Phi_3^{\rm cand}$, instead of
being settled by its self-bracket alone.

%==============================================================================
\section{Constraint brackets through the tertiary stage}
\label{app:SelfBrackets}
%==============================================================================

This appendix provides the explicit derivations of $\{\phi_2,\phi_2\}$
and $\{\phi_2,\Phi_3^{\rm cand}\}$. These kernels enter the sector discussion
of Sec.~\ref{sec:SectorStructure}.

\subsection*{Secondary self-bracket $\{\phi_2(x),\phi_2(y)\}$}

Using the canonical Poisson bracket in functional form,
\begin{equation}
\{F,G\}
=
\int d^3z\,
\left[
\frac{\delta F}{\delta A_k(z)}\frac{\delta G}{\delta \pi^k(z)}
-
\frac{\delta F}{\delta \pi^k(z)}\frac{\delta G}{\delta A_k(z)}
\right],
\label{eq:AppPBFormulaSelf}
\end{equation}
and inserting the functional derivatives of $\phi_2$ from
Eqs.~\eqref{eq:AppdphiA}--\eqref{eq:Appdphipi}, one obtains
\begin{align}
\{\phi_2(x),\phi_2(y)\}
={}&
\int d^3z\,
\Bigl[
\tfrac{1}{2}\theta^{kl}\partial_l^x\!\bigl(J^0(x)\delta^3(x-z)\bigr)\,
\times \nonumber\\
&\partial_k^z\delta^3(y-z)-\,
\partial_k^x\delta^3(x-z)\,
\times \nonumber\\
&\tfrac{1}{2}\theta^{kl}\partial_l^z\!\bigl(J^0(y)\delta^3(y-z)\bigr)
\Bigr].
\label{eq:AppSelfBracketIntegral}
\end{align}
Evaluating the first term: since $J^0(x)$ is independent of $z$, the
factor $\partial_l^x(J^0(x)\delta^3(x-z))=(\partial_l J^0(x))\delta^3(x-z)$
when the $x$-derivative acts only on $J^0$, plus a total-derivative term
that integrates to zero under suitable boundary conditions. After integrating
over $z$ using $\delta^3(x-z)$,
\begin{equation}
\begin{aligned}
\text{(first term)}
{}&= \tfrac{1}{2}\theta^{kl}(\partial_l J^0(x))\,\partial_k^y\delta^3(y-x)
\\ & =
-\tfrac{1}{2}\Theta^k(x)\,\partial_k^x\delta^3(x-y)\,.
\label{eq:AppSelf1}
\end{aligned}
\end{equation}
Evaluating the second term analogously, with the $z$-integration setting
$z=y$,
\begin{equation}
\begin{aligned}
\text{(second term)}
{}&=
-\tfrac{1}{2}\theta^{kl}(\partial_l J^0(y))\,\partial_k^x\delta^3(x-y)
\\ &=
\tfrac{1}{2}\Theta^k(y)\,\partial_k^y\delta^3(x-y)\,.
\label{eq:AppSelf2}
\end{aligned}
\end{equation}
Adding Eqs.~\eqref{eq:AppSelf1} and \eqref{eq:AppSelf2} and using
$\partial_k^x\delta^3(x-y)=-\partial_k^y\delta^3(x-y)$,
\begin{equation}
\begin{aligned}
\{\phi_2(x),\phi_2(y)\}
{}&=
\tfrac{1}{2}\Theta^k(y)\,\partial_k^y\delta^3(x-y)
\\&-\tfrac{1}{2}\Theta^k(x)\,\partial_k^x\delta^3(x-y)\,,
\label{eq:AppPhi2Phi2Final}
\end{aligned}
\end{equation}
which is manifestly antisymmetric and gives the secondary self-bracket used in Sec.~\ref{sec:SectorStructure}
of the main text. This bracket vanishes if and only if $\Theta^k=0$ everywhere.

\subsection*{Secondary--tertiary bracket $\{\phi_2(x),\Phi_3^{\rm cand}(y)\}$}

Write the compact first-order tertiary candidate as
\begin{equation}
\begin{aligned}
\Phi_3^{\rm cand}
{}&=
-C
+\Theta^l\partial_lA_0
+\theta^{li}(\partial_iJ^k)\partial_lA_k
\\ &-\frac{1}{2}\theta^{il}\partial_i\!\bigl(A_l C\bigr), \qquad
C:=\partial_\mu J^\mu,
\label{eq:AppPhi3CandCompact}
\end{aligned}
\end{equation}
where we used the antisymmetry of $\theta^{li}$ to simplify
$\theta^{li}\partial_i(J^k\partial_lA_k)=\theta^{li}(\partial_iJ^k)\partial_lA_k$.
Since $\Phi_3^{\rm cand}$ contains no canonical momenta and only the
$\partial_m\pi^m$ piece of $\phi_2$ has non-vanishing brackets with the
spatial gauge potential, one immediately has
\begin{equation}
\begin{aligned}
\{\partial_m\pi^m(x),A_k(y)\}&=\partial_k^y\delta^3(x-y),
\\
\{\partial_m\pi^m(x),A_0(y)\}&=0.
\label{eq:AppBasicPhi2Phi3CandBrackets}
\end{aligned}
\end{equation}
Therefore,
\begin{align}
\{\phi_2(x),\Phi_3^{\rm cand}(y)\}
{}&=\,\theta^{li}(\partial_iJ^k)(y)\,\partial_l^y\partial_k^y\delta^3(x-y)
\nonumber\\ &
-\frac{1}{2}\theta^{il}\partial_i^y\!\Bigl(C(y)\,\partial_l^y\delta^3(x-y)\Bigr).
\label{eq:AppPhi2Phi3CandStep1}
\end{align}
Expanding the last term,
\begin{equation}
\theta^{il}\partial_i\!\Bigl(C\,\partial_l\delta^3\Bigr)
=
\theta^{il}(\partial_i C)\partial_l\delta^3
+\theta^{il}C\,\partial_i\partial_l\delta^3.
\label{eq:AppPhi2Phi3CandExpansion}
\end{equation}
The second contribution vanishes because $\partial_i\partial_l\delta^3$ is
symmetric in $(i,l)$ whereas $\theta^{il}$ is antisymmetric. One is left with
\begin{equation}
\begin{aligned}
\{\phi_2(x),\Phi_3^{\rm cand}(y)\}
{}&=
\theta^{li}(\partial_iJ^k)(y)\,\partial_l^y\partial_k^y\delta^3(x-y) \\ &
-\frac{1}{2}\theta^{il}(\partial_i C)(y)\,\partial_l^y\delta^3(x-y),
\label{eq:AppPhi2Phi3CandFinal}
\end{aligned}
\end{equation}
which is the mixed kernel quoted in Sec.~\ref{sec:SectorStructure}. The first
term depends on spatial current gradients, while the second is driven by
gradients of the full divergence $\partial_\mu J^\mu$.

\subsection*{Kernel decomposition of $\{\phi_2(x),\Phi_3^{\rm cand}(y)\}$}

Equation~\eqref{eq:AppPhi2Phi3CandFinal} may be decomposed into the symmetric
spatial-current kernel and the divergence-gradient kernel introduced in the
main text. Defining
\begin{equation}
\Sigma^{lk}
:=
\frac{1}{2}\Bigl(\theta^{li}\partial_iJ^k+\theta^{ki}\partial_iJ^l\Bigr),
\label{eq:AppSigmaDef}
\end{equation}
and
\begin{equation}
\Xi^l
:=
\theta^{il}\partial_i C,
\label{eq:AppXiDef}
\end{equation}
one obtains
\begin{equation}
\begin{aligned}
\{\phi_2(x),\Phi_3^{\rm cand}(y)\}
{}&=
\Sigma^{lk}(y)\,\partial_l^y\partial_k^y\delta^3(x-y)
\\ &-\frac{1}{2}\Xi^l(y)\,\partial_l^y\delta^3(x-y),
\label{eq:AppPhi2Phi3CandDecomposed}
\end{aligned}
\end{equation}
because only the symmetric part of the coefficient of
$\partial_l\partial_k\delta^3(x-y)$ contributes. This is the form used in
Sec.~\ref{sec:SectorStructure} to state the explicit kernel conditions for the
restricted sufficient first-class subcase.

%==============================================================================
\subsection*{First-order form of $\dot{\Phi}_3^{\rm cand}$}

This appendix records the explicit first-order structure of the next Dirac step starting from the tertiary-stage candidate $\Phi_3^{\rm cand}=\dot\phi_2$. Since the candidate contains no momentum variables, its self-bracket vanishes identically. The only bracket relevant for multiplier fixing is the one with the primary constraint,
\begin{equation}
\begin{aligned}
\{\Phi_3^{\rm cand}(x),\phi_1(y)\}
&=
\Theta^l(x)\partial_l^x\delta^3(x-y), \\
\Theta^l&:=\theta^{li}\partial_iJ^0.
\end{aligned}
\end{equation}
Retaining terms through first order in $\theta^{ij}$, one finds the schematic preservation equation quoted in the main text,
\begin{align}
\dot\Phi_3^{\rm cand}
={}&
\Theta^l \partial_l u_1
+\theta^{li}(\partial_i J^m)\partial_l \partial_m A_0 \nonumber\\
&
-\eta_{00}\theta^{li}(\partial_i J_m)\partial_l \pi^m 
+\theta^{li}(\partial_i\partial_0 J^0)\partial_l A_0 \nonumber\\
&
+\theta^{li}(\partial_i\partial_0 J^k)\partial_l A_k
-\partial_0(\partial_\mu J^\mu),
\label{eq:Phi3CandEvolutionAppendix}
\end{align}
where the last term comes from the explicit time dependence of the source-only part of $\Phi_3^{\rm cand}$. The first term is the only one involving the arbitrary multiplier. Therefore, in regions where $\Theta^l\neq0$, the condition $\dot\Phi_3^{\rm cand}\approx0$ constrains $u_1$ along the vector field $\Theta^l\partial_l$ and closes the generic unrestricted branch by multiplier fixing. Only in degenerate sectors, such as $\Theta^l=0$ together with additional cancellations, can the algorithm continue beyond this stage.

The formulas used in the reduced description of
Sec.~\ref{sec:ReducedDescription} refer only to the further restricted
sufficient first-class subcase in which the full tertiary condition
$\Phi_3^{\rm cand}$ vanishes identically and the equal-time kernels listed
in Sec.~\ref{sec:SectorStructure} disappear.

%==============================================================================

%==============================================================================
\section{Independent proof of the divergence identity}
\label{app:ELDivergenceIdentity}
%==============================================================================

This appendix gives an independent derivation of the identity
\begin{equation}
\partial_\nu\mathcal E_B^\nu
=\dot\phi_2,
\label{eq:AppMainIdentity}
\end{equation}
using only the primary and secondary constraints, the decomposition \eqref{eq:PhiDotSplit}, and the explicit derivative \eqref{eq:PhiExpl}. No simplified formula from the main text is assumed.

\subsection*{Divergence of the Euler--Lagrange equations}

Write the Banerjee-mapped Lagrangian as $\mathcal L=\mathcal L_{\text{ws}}(F)+\mathcal L_J$, where
\begin{equation}
\mathcal{L}_J
=
-A_\mu J^\mu
-\theta^{\mu\alpha}J^\beta A_\mu\partial_\alpha A_\beta
+\frac12\theta^{\mu\alpha}J^\beta A_\mu\partial_\beta A_\alpha.
\end{equation}
Since $\mathcal L_{\text{ws}}$ depends only on the antisymmetric field strength, its Euler--Lagrange contribution has identically vanishing divergence. It is therefore enough to vary $\mathcal L_J$.

A direct application of the Euler--Lagrange operator gives
\begin{equation}
\begin{aligned}
\mathcal E_J^\nu
={}& -J^\nu
-\theta^{\nu\alpha}J^\beta F_{\alpha\beta}
-\theta^{\alpha\beta}(\partial_\alpha J^\nu)A_\beta \\
&\quad
-\theta^{\alpha\beta}J^\nu\partial_\alpha A_\beta
+\tfrac12\theta^{\nu\beta}(\partial_\mu J^\mu)A_\beta.
\end{aligned}
\label{eq:AppELJIndependent}
\end{equation}
Taking its divergence and using $\theta^{0i}=0$ yields
\begin{align}
\partial_\nu\mathcal E_B^\nu
={}&
-\partial_\mu J^\mu
-\partial_i(\theta^{il}J^\beta F_{l\beta})
-\partial_\nu\!\bigl(\theta^{il}(\partial_iJ^\nu)A_l\bigr)
\nonumber\\
&
-\partial_\nu\!\bigl(\theta^{il}J^\nu\partial_iA_l\bigr)
+\frac12\partial_i\!\bigl(\theta^{il}(\partial_\mu J^\mu)A_l\bigr).
\end{align}
Expanding the time and space derivatives separately, using the antisymmetry of $\theta^{il}$ and the commutativity of partial derivatives, one finds the compact result
\begin{equation}
\begin{aligned}
\partial_\nu \mathcal E_B^\nu
={}& -\partial_\mu J^\mu
+\theta^{li}(\partial_iJ^0)\partial_lA_0
+\theta^{li}\partial_i(J^k\partial_lA_k) \\
&\quad
-\frac12\theta^{il}\partial_i\!\bigl[A_l(\partial_\mu J^\mu)\bigr].
\end{aligned}
\label{eq:AppDivELCompact}
\end{equation}

\subsection*{Independent computation of $\dot\phi_2$}

Starting from the accepted secondary constraint
\begin{equation}
\phi_2
=\partial_i\pi^i-J^0+\frac12\theta^{kl}\partial_l(A_kJ^0),
\end{equation}
one has the functional derivatives
\begin{align}
\frac{\delta \phi_2(x)}{\delta A_k(y)}
&=
\frac12\theta^{kl}\partial_l^x\!\bigl(J^0(x)\delta^3(x-y)\bigr),
\\
\frac{\delta \phi_2(x)}{\delta \pi^k(y)}
&=
\partial_k^x\delta^3(x-y).
\end{align}
Inserting them into the functional Poisson-bracket formula and evaluating the contributions of $H_{\text{ws}}$ and $H_J$ independently, one obtains
\begin{align}
\{\phi_2,H_{\text{ws}}\}
{}&=\frac12\theta^{kl}(\partial_lJ^0)\partial_kA_0\nonumber\\
&
-\frac12\theta^{jl}\eta_{00}\eta_{jn}\partial_l(J^0\pi^n),
\\
\{\phi_2,H_J\}
={}&-\partial_iJ^i
-\tfrac12\theta^{il}\partial_i(J^0\partial_lA_0) \nonumber\\
&
+\tfrac12\theta^{jl}\partial_i\partial_l(J^iA_j)- \tfrac12\theta^{il}\partial_i(J^k\partial_lA_k) \nonumber\\
&
-\tfrac12\theta^{il}\eta_{00}\eta_{lk}\partial_i(J^0\pi^k)
+\tfrac12\theta^{il}\partial_i(J^kF_{kl}).
\end{align}
The momentum terms cancel exactly in the sum. Adding also the explicit derivative \eqref{eq:PhiExpl}, one recovers
\begin{align}
\dot\phi_2
={}&-\partial_\mu J^\mu
+\theta^{li}(\partial_iJ^0)\partial_lA_0
+\tfrac12\theta^{jl}\partial_i\partial_l(J^iA_j)
\nonumber\\
&+\theta^{li}\partial_i(J^k\partial_lA_k)
+\tfrac12\theta^{il}\partial_i(J^k\partial_kA_l)\nonumber\\
&
+\tfrac12\theta^{kl}\partial_l(A_k\partial_0J^0).
\end{align}
The last three terms may be rearranged algebraically into
\begin{equation}
-\frac12\theta^{il}\partial_i\!\bigl[A_l(\partial_\mu J^\mu)\bigr].
\end{equation}
so that
\begin{equation}
\begin{aligned}
\dot\phi_2
={}&
-\partial_\mu J^\mu
+\theta^{li}(\partial_iJ^0)\partial_lA_0
+\theta^{li}\partial_i(J^k\partial_lA_k)
\nonumber\\
& -\frac12\theta^{il}\partial_i\!\bigl[A_l(\partial_\mu J^\mu)\bigr].
\label{eq:AppPhi2DotCompact}
\end{aligned}
\end{equation}
Comparing Eqs.~\eqref{eq:AppDivELCompact} and \eqref{eq:AppPhi2DotCompact} proves Eq.~\eqref{eq:AppMainIdentity}. In the unrestricted source problem, the third Dirac-stage candidate is therefore algebraically identical to the canonical pullback of the divergence of the Banerjee Euler--Lagrange equations.

\end{document}